\documentclass[preprint,12pt]{mystyle}
\usepackage{graphicx}

\usepackage{subfigure}
\usepackage{psfrag}
\graphicspath{{figs/}}

\usepackage[centertags]{amsmath}
\usepackage{amsfonts}
\usepackage{color}
\usepackage{amssymb}
\usepackage{amsthm}
\usepackage{newlfont}
\usepackage[applemac]{inputenc}

\newtheorem{lemma}{Lemma}
\newtheorem{theorem}[lemma]{Theorem}

\newproof{pf}{Proof}

\def   \qed            {\hfill $\Box$\par\bigskip}
\def\Box{\hbox{\vrule width6pt height6pt depth0pt}}

{\newdimen\quadamount\quadamount=1.5em%
\def\TAB##1{\\\hskip##1\quadamount\relax}%
\medskip\sf\begin{tabular}{l}}%
{\end{tabular}\medskip}%

\journal{DAM Special Issue: Combinatorial Optimization}

\begin{document}
\sloppy
\begin{frontmatter}

%
%
%
%

\title{Complexity of colouring problems restricted to unichord-free and $\{$square,unichord$\}$-free graphs\tnoteref{t1}}
\tnotetext[t1]{February 3, 2012. An extended abstract of this paper was presented at ISCO 2010, the International Symposium on Combinatorial Optimization, and appeared in Electronic Notes in Discrete Mathematics 36 (2010) 671--678.}

\author[rm]{Raphael C. S. Machado}
\ead{rcmachado@inmetro.gov.br}

\author[cf]{Celina M. H. de Figueiredo}
\ead{celina@cos.ufrj.br}

\author[nt]{Nicolas Trotignon}
\ead{nicolas.trotignon@ens-lyon.fr}

\address[rm]{Inmetro, Rio de Janeiro}
\address[cf]{COPPE, UFRJ, Rio de Janeiro}
\address[nt]{CNRS, LIP, ENS Lyon, UCBL, INRIA, Universit\'e de Lyon}


\begin{abstract}
  A \emph{unichord} in a graph is an edge that is the unique chord of
  a cycle. A \emph{square} is an induced cycle on four vertices.
  A graph is \emph{unichord-free} if none of its edges is a
  unichord. We give a slight restatement of a known structure theorem
  for unichord-free graphs and use it to show that, with the only exception 
  of the complete graph $K_4$, every square-free, unichord-free
  graph of maximum degree~3 can be total-coloured with four colours.
  Our proof can be turned into a
  polynomial time algorithm that actually outputs the colouring.
  This settles the class of  square-free, unichord-free graphs
  as a class for which edge-colouring is NP-complete but
  total-colouring is polynomial.  

\end{abstract}

\end{frontmatter}


{
\typeout{:?0000} 
\typeout{:?1111} 
}
\def\baselinestretch{1}
\typeout{Introduction}

\section{Introduction}
\label{s:introduction}
In the present paper, we deal with simple connected graphs. A graph $G$
has vertex set $V(G)$ and edge set $E(G)$. An \emph{element} of $G$ is
one of its vertices or edges and the set of elements of $G$ is denoted
by $S(G):=V(G)\cup E(G)$. Two vertices $u,v\in V(G)$ are
\emph{adjacent} if $uv\in E(G)$; two edges $e_{1},e_{2}\in E(G)$ are
\emph{adjacent} if they share a common endvertex; a vertex $u$ and an
edge~$e$ are \emph{incident} if $u$ is an endvertex of $e$. For a
graph $G=(V,E)$ and $V' \subseteq V$,  $G[V']$ denotes the
subgraph of $G$ induced by $V'$. The \emph{degree} of a vertex $v$ in
$G$ is the number of edges of $G$ incident to $v$.  We use the
standard notation of $K_{n}$, $K_{m,n}$ and $C_{n}$ for complete
graphs, complete bipartite graphs and cycle-graphs, respectively.

An \emph{edge-colouring} is an association of colours to the edges of a graph in such a way that no adjacent edges receive the same colour. The \emph{chromatic index} of a graph $G$, denoted $\chi'(G)$, is the least number of colours sufficient to edge-colour this graph. Clearly, $\chi'(G)\geq\Delta(G)$, where $\Delta(G)$ denotes the maximum degree of a vertex in $G$. Vizing's theorem~\cite{Vizing} states that every graph $G$ can be edge-coloured with $\Delta(G)+1$ colours. By Vizing's theorem only two values are possible for the chromatic index of a graph: $\chi'(G)=\Delta(G)$~or~$\Delta(G)+1$. If a graph~$G$ has chromatic index $\Delta(G)$, then $G$ is said to be \emph{Class~1}; if $G$ has chromatic index $\Delta(G)+1$, then $G$ is said to be \emph{Class~2}. 

A \emph{total-colouring} is an association of colours to the elements of a graph in such a way that no adjacent or incident elements receive the same colour. The \emph{total chromatic number} of a graph $G$, denoted $\chi_{T}(G)$, is the least number of colours sufficient to total-colour this graph. Clearly, $\chi_{T}(G)\geq\Delta(G)+1$. The Total Colouring Conjecture (TCC) states that every graph $G$ can be total-coloured with $\Delta(G)+2$ colours. By the TCC only two values would be possible for the total chromatic number of a graph: $\chi_{T}(G)=\Delta(G)+1$~or~$\Delta(G)+2$. If a graph~$G$ has total chromatic number $\Delta(G)+1$, then $G$ is said to be \emph{Type~1}; if $G$ has total chromatic number $\Delta(G)+2$, then $G$ is said to be \emph{Type~2}. The TCC has been verified in restricted cases, such as graphs with maximum degree~$\Delta\leq5$~\cite{Kostochka1,Kostochka2,Rosenfeld,Vijayaditya}, but the general problem is open since 1964, exposing how challenging the problem of total-colouring~is.

It is NP-complete to determine whether the total chromatic number of a
graph $G$ is $\Delta(G)+1$~\cite{SanchezArroyo1,SanchezArroyo}. Remark that the
original NP-completeness proof was a reduction from the edge-colouring
problem, suggesting that, for most graph classes, total-colouring
would be harder than edge-colouring. The present paper presents the
first example of an unexpected graph class for which edge-colouring is NP-complete
while total-colouring is polynomial. For a discussion on the search of 
complexity separating classes for edge-colouring and total-colouring 
please refer to~\cite{BipJBCS}.

A \emph{square} is an induced cycle on four vertices. A \emph{unichord} is an edge that is the
unique chord of a cycle in the graph.  In the present work, we consider total-colouring restricted to
$\{$square,unichord$\}$-free graphs --- that is, graphs that do not contain (as an induced
subgraph) a cycle with a unique chord nor a square. The class of unichord-free
graphs was studied by Trotignon and
Vu{\v s}kovi{\'c}~\cite{tv}. They give a structure theorem for the
class, and use it to develop algorithms for recognition and
vertex-colouring. Basically, this structure result states that every
unichord-free graph can be built starting from a restricted set of
basic graphs and applying a series of known ``gluing'' operations. The
following results are obtained in~\cite{tv} for unichord-free graphs:
an $O(nm)$ recognition algorithm, an $O(nm)$ algorithm for optimal
vertex-colouring, an $O(n+m)$ algorithm for maximum clique, and the
NP-completeness of the maximum stable set problem.

Machado, Figueiredo and Vu{\v s}kovi{\'c}~\cite{Edge-tv} investigated
whether the structure results of~\cite{tv} could be applied to obtain
a polynomial-time algorithm for the edge-colouring problem restricted
to unichord-free graphs. The authors obtained a negative answer by
establishing the NP-completeness of the edge-colouring problem
restricted to unichord-free graphs. The authors investigated also the
complexity of the edge-colouring in the subclass of
$\{$square,unichord$\}$-free graphs. The class of
$\{$square,unichord$\}$-free graphs can be viewed as the class of
graphs that can be constructed from the same set of basic graphs, but
using one less operation (the so-called \emph{1-join operation} is
forbidden). For $\{$square,unichord$\}$-free graphs, an interesting
dichotomy is proved in~\cite{Edge-tv}: if the maximum degree is not~3,
the edge-colouring problem is polynomial, while for inputs with
maximum degree~3, the problem is NP-complete.

It is a natural step to investigate the complexity of total-colouring
restricted to classes for which the complexity of edge-colouring is
already established. This approach is observed, for example, in the
classes of outerplanar graphs~\cite{Zhang},
series-parallel graphs~\cite{WangPang}, and some subclasses of
planar graphs~\cite{WeifanWang} and join
graphs~\cite{OneKindJoin,OutroJoin}. One important motivation for this approach is the search for
``separating'' classes, that are classes for which the complexities of
edge-colouring and total-colouring differ. We must mention that all
previously known separating classes, in this sense, are classes for
which edge-colouring is polynomial and total-colouring is NP-complete,
such as the case of bipartite graphs. In other words, there is no
known example of a class for which edge-colouring is NP-complete and
total-colouring is polynomial, an evidence that total-colouring might be
``harder'' than edge-colouring.

Considering the recent interest in colouring problems restricted to
unichord-free and $\{$square,unichord$\}$-free graphs, specially the
results~\cite{Total-tv} on total-colouring $\{$square,unichord$\}$-free graphs of
maximum degree at least~4, it is natural to investigate the remaining
case of total-colouring restricted to
$\{$square,unichord$\}$-free graphs of maximum degree~3. In the
present work, we prove that, except for the complete graph $K_{4}$,
every $\{$square,unichord$\}$-free graph of maximum degree~3 is
Type~1.  Our proof can easily be turned into a polynomial time
algorithm that outputs the colouring whose existence is proved (we omit
the details of the implementation).  Table~\ref{t:tabela} summarizes
the current status of colouring problems restricted to unichord-free
and $\{$square,unichord$\}$-free graphs.

\begin{center}
\begin{table}[t]
\begin{tabular}{|c|c|c|c|}
\hline
Problem $\setminus$ Class & unichord-free & $\{$sq.,un.$\}$-free, $\Delta\geq4$ & $\{$sq.,un.$\}$-free, $\Delta=3$\\
\hline\hline
vertex-colouring & Polynomial~\cite{tv} & Polynomial~\cite{tv} & Polynomial~\cite{tv} \\
edge-colouring & NP-complete~\cite{Edge-tv} & Polynomial~\cite{Edge-tv} & NP-complete~\cite{Edge-tv} \\
total-colouring & NP-complete~\cite{Total-tv} & Polynomial~\cite{Total-tv} & Polynomial$^{*}$ \\
  \hline
\end{tabular}
\caption{Computational complexity of colouring problems restricted to unichord-free and to $\{$square,unichord$\}$-free graphs --- star indicates result established in the present paper.}
\label{t:tabela}
\end{table}
\end{center}

Observe in Table~\ref{t:tabela} the interesting degree dichotomy with respect to
edge-colouring $\{$square,unichord$\}$-free graphs. Since the
technique used in~\cite{Total-tv} to total-colour
$\{$square,unichord$\}$-free graphs could only be applied to the case
of maximum degree at least~4, a similar dichotomy could be expected
for the total-colouring problem. Surprisingly, we establish in the
present work that such dichotomy does not exist. It is additionally
interesting to note that different approaches were needed to solve the
total-colouring problem in the cases $\Delta\geq4$ and $\Delta=3$.
Note that a natural subclass of unichord-free graphs is the class of
\emph{chordless graphs}, that are the graphs where all cycles are
chordless.  For these graphs, we have proved that edge- and total-colouring are
all polynomially solvable, with no restriction on the degree and the
presence of squares~\cite{Chordless}.

In Section~\ref{s:structural}, we recall the structure theorem for
unichord-free graphs.  We restate it in a slightly different form that
is well-fit to our goal of total-colouring. In
Section~\ref{s:total-colouring}, we prove the main result of the paper
that non-complete $\{$square, unichord$\}$-free graphs with maximum
degree~3 are Type~1.

\section{Decomposing unichord-free graphs}
\label{s:structural}

We revisit the decomposition result for unichord-free~\cite{tv}
graphs, stating it in a new form that will be suitable for
total-colouring.

\subsection{Decomposition theorem}

The \emph{Petersen graph} is the cubic graph on vertices $\{a_1,
\dots, a_5, b_1, \dots, b_5\}$ so that both $a_1a_2a_3a_4a_5a_1$ and
$b_1b_2b_3b_4b_5b_1$ are chordless cycles, and such that the only
edges between some $a_i$ and some $b_i$ are $a_1b_1$, $a_2b_4$,
$a_3b_2$, $a_4b_5$, $a_5b_3$.  Figure~\ref{f:basicPetersen} exhibits a
(total-coloured) graph isomorphic to the Petersen graph.  We denote by
$P$ the Petersen graph and by $P^{*}$ the graph obtained from $P$ by
the removal of one vertex. Observe that $P$ is unichord-free.

The \emph{Heawood graph} is the cubic bipartite graph on vertices
$\{a_1, \dots , a_{14}\}$ so that $a_1a_2\dots a_{14}a_1$ is a cycle,
and such that the only other edges are $a_{1}a_{10}$, $a_{2}a_{7}$,
$a_{3}a_{12}$, $a_{4}a_{9}$, $a_{5}a_{14}$, $a_{6}a_{11}$,
$a_{8}a_{13}$.  Figure~\ref{f:basicHeawood} exhibits a
(total-coloured) graph isomorphic to the Heawood graph.  The
Hamiltonian cycle from the definition is shown in bold edges.  We
denote by $H$ the Heawood graph and by $H^{*}$ the graph obtained from
$H$ by the removal of one vertex. Observe that $H$ is unichord-free.

It essential for understanding what follows to notice that the
Petersen and Heawood graphs are both vertex-transitive.  It is also
helpful to know their most classical embeddings, as shown for instance
in \cite{tv}.

\begin{figure}[t]
     \centering
     \subfigure[Petersen graph.]{
          \label{f:basicPetersen}
          \includegraphics[width=140pt]{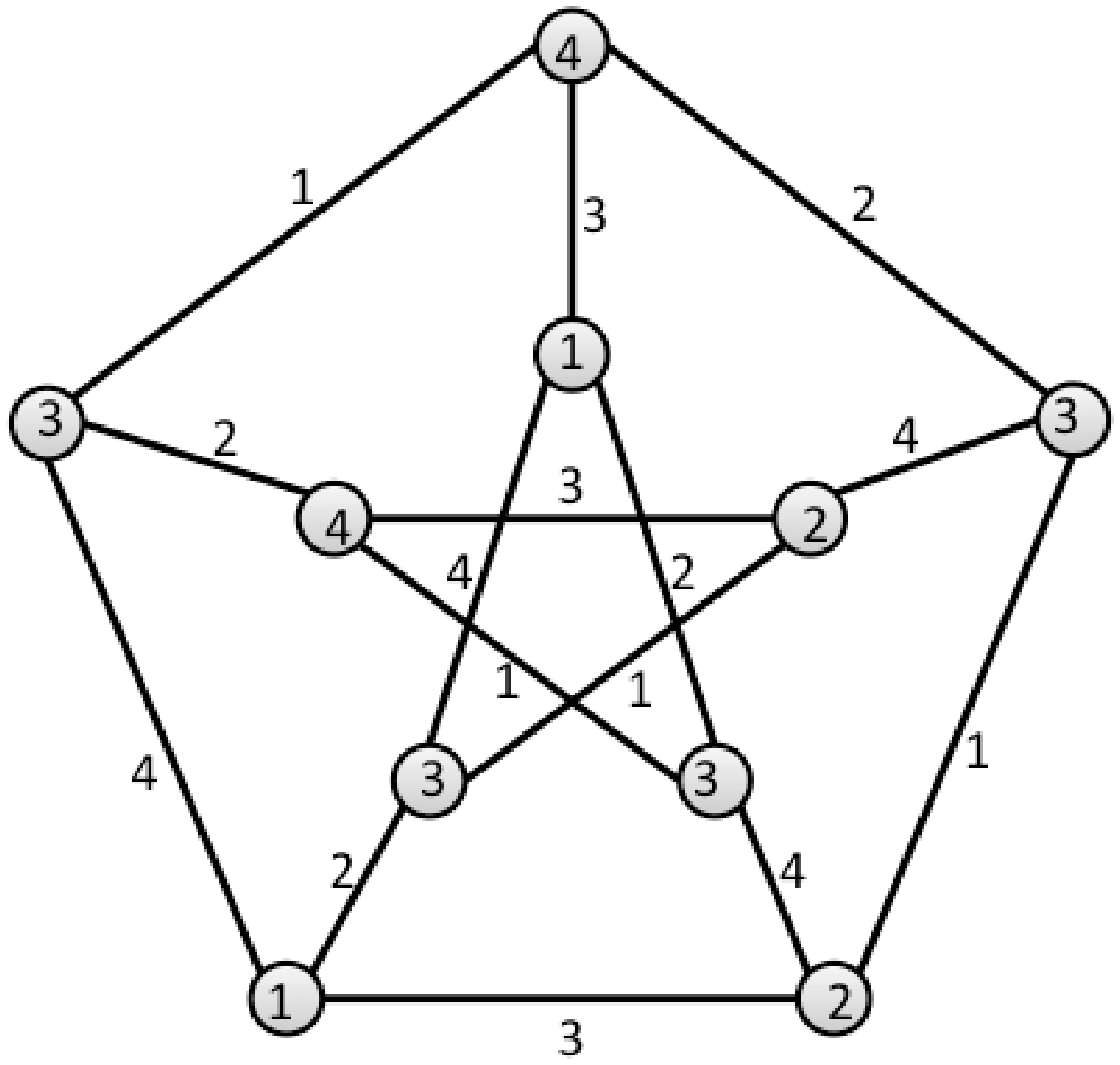}
}
          \hspace{20pt}
     \subfigure[Heawood graph.]{
          \label{f:basicHeawood}
          \includegraphics[width=180pt]{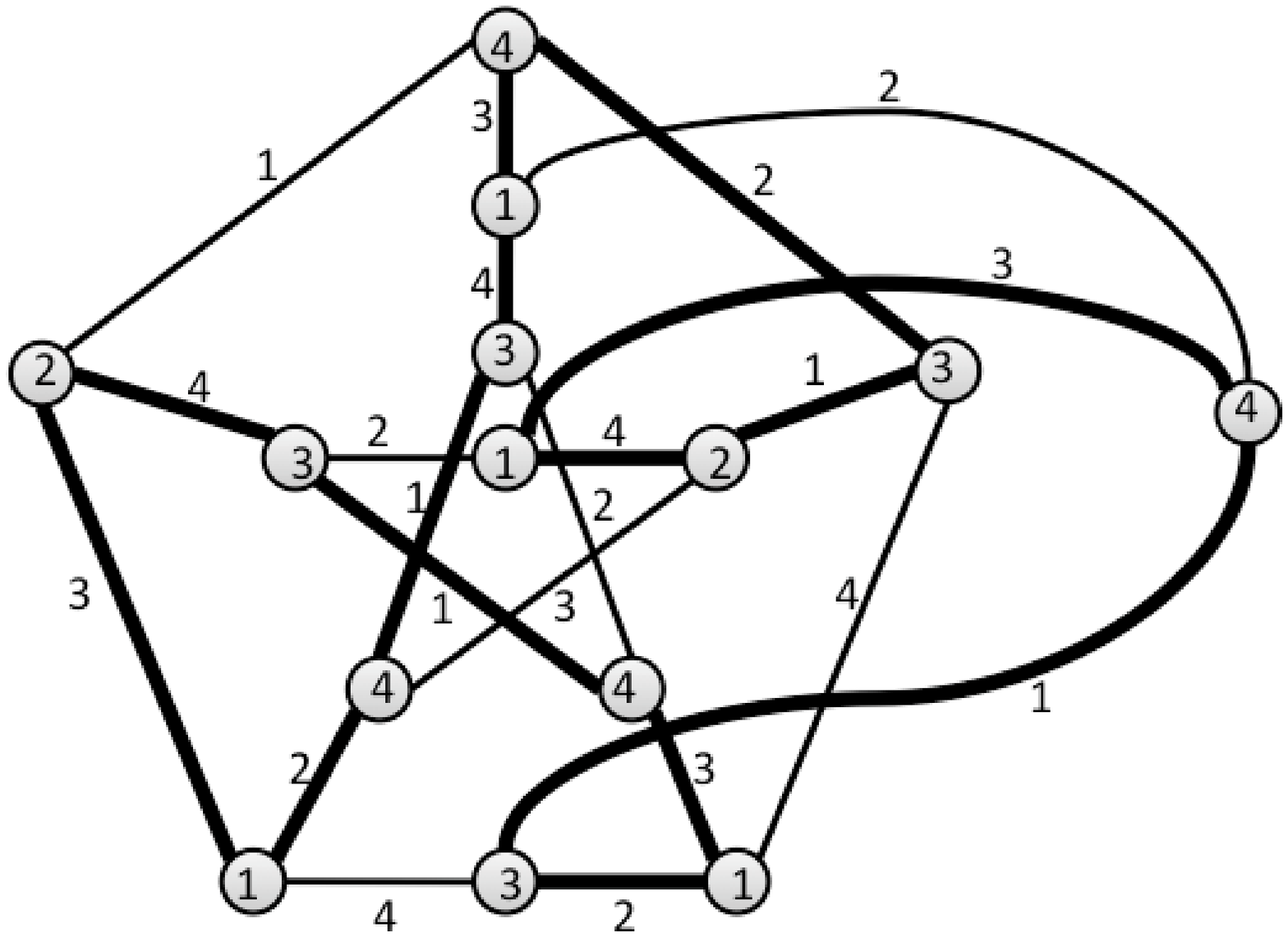}
}
    \caption{4-total-colourings of the Petersen graph and of the Heawood graph.}
    \label{f:colouring-basic}
\end{figure}

A graph is \emph{strongly 2-bipartite} if it is square-free and
bipartite with bipartition $(X,Y)$ where every vertex in $X$ has
degree~2 and every vertex in $Y$ has degree at least~3. A strongly
2-bipartite graph is unichord-free because any chord of a cycle is an
edge between two vertices of degree at least three, so every
cycle in a strongly 2-bipartite graph is chordless.

A \emph{cutset} $S$ of a connected graph $G$ is a set of vertices or 
a set of edges whose removal disconnects $G$.  A decomposition
of a graph is the systematic 
removal of cutsets to obtain smaller graphs 
(by adding vertices and edges to connected components of $G\setminus S$),
called the {\em blocks of decomposition}, repeating this until a set of basic (undecomposable) graphs is obtained.
The goal
of decomposing a graph is trying to solve a problem on the original
graph by combining the solutions on the blocks.  The following cutsets
are used in the decomposition theorem of Trotignon and Vu{\v
  s}kovi{\'c} for unichord-free graphs~\cite{tv}:

\begin{itemize}
\item A \emph{1-cutset} of a connected graph $G=(V,E)$ is a vertex $v$
  such that $V$ can be partitioned into sets $X$, $Y$ and $\{ v \}$,
  so that there is no edge between $X$ and $Y$.  We say that $(X,
  Y,v)$ is a \emph{split} of this 1-cutset.

\item A \emph {special 2-cutset} of a connected graph $G=(V,E)$ is a
  pair of non-adjacent vertices $a, b$, both of degree at least three,
  such that $V$ can be partitioned into sets $X$, $Y$ and $\{ a,b \}$
  so that: $|X|\geq 2$, $|Y| \geq 2$; there is no edge between $X$ and
  $Y$, and both $G[X \cup \{ a,b \}]$ and $G[Y \cup \{ a,b \}]$
  contain an $ab$-path.  We say that $(X, Y, a, b)$ is a \emph{split}
  of this special 2-cutset.  Note that in~\cite{tv},
  \emph{special} 2-cutsets are called \emph{proper} 2-cutsets.  We
  apologize for changing the terminology, but we find it more convenient
  to keep the ``proper 2-cutset'' for the restatement given in Section~\ref{s:revisited}.

\item A \emph{proper 1-join} of a graph $G=(V,E)$ is a partition of
  $V$ into sets $X$ and $Y$ such that there exist sets $A \subseteq X$
  and $B \subseteq Y$ so that: $|A|\geq 2$, $|B| \geq 2$; $A$ and $B$
  are stable sets; there are all possible edges between $A$ and $B$;
  there is no other edge between $X$ and $Y$.  We say that $(X, Y, A,
  B)$ is a \emph{split} of this proper 1-join.
\end{itemize}

We are now ready to state a decomposition result for unichord-free graphs.

\begin{theorem}
  \label{th:1}
  {\em (Trotignon and ~Vu{\v s}kovi{\'c}~\cite{tv})} If $G$ is a
  connected unichord-free graph, then either $G$ is a complete graph,
  or a cycle, or a strongly 2-bipartite graph, or an induced subgraph
  of the Petersen graph, or an induced subgraph of the Heawood graph,
  or $G$ has a 1-cutset, a special 2-cutset, or a proper 1-join.
\end{theorem}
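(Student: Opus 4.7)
The plan is to proceed by contradiction: assume that $G$ is a connected unichord-free graph that is not complete, not a cycle, not strongly 2-bipartite, and not an induced subgraph of the Petersen or Heawood graph, and show that $G$ must admit a 1-cutset, a special 2-cutset, or a proper 1-join. Since Theorem~\ref{th:1} is attributed to~\cite{tv}, the plan essentially follows the strategy of that paper; here I only outline the main line of argument.

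First I would dispose of low-degree cases. If $G$ has a vertex $v$ of degree~$1$, then either $G$ is the single edge $K_2$ or the unique neighbour of $v$ is a 1-cutset. If $G$ has a vertex $v$ of degree~$2$ with neighbours $a$ and $b$, then when $ab\notin E(G)$ the pair $\{a,b\}$ is a natural candidate for a special 2-cutset, and when $ab\in E(G)$ the triangle $vab$ is forced by unichord-freeness to extend only in highly restricted ways. A short case analysis either exhibits the required cutset, or returns $G$ to a cycle or a complete graph, or reduces us to the case $\delta(G)\geq 3$ with strong local information about short cycles.

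Second I would handle the triangle case. Unichord-freeness forces every edge of a triangle $uvw$ to lie either in no longer cycle at all or in a second triangle sharing that edge; otherwise this edge would be the unique chord of a longer cycle. Iterating this observation shows that triangles in $G$ aggregate into cliques, and the interface between a maximal such clique and the rest of $G$ then produces either a 1-cutset or a special 2-cutset unless $G$ itself is complete. This reduces the remainder of the analysis to the triangle-free case.

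In the triangle-free subcase I would split on bipartiteness. If $G$ is bipartite, a careful look at how chords of even cycles propagate, combined with the absence of a proper 1-join, yields either the strongly 2-bipartite conclusion or exposes a square whose two sides of neighbours are structured as a proper 1-join. If $G$ is triangle-free and not bipartite, then $G$ contains a chordless odd cycle of length at least~$5$, and one must show that, under the assumption that $G$ admits no 1-cutset, no special 2-cutset, and no proper 1-join, the attachments to such a cycle are so tightly constrained that $G$ embeds as an induced subgraph of $P$ or of $H$. The hardest step will be precisely this last one: propagating the unichord-free condition along chordless cycles in a highly connected triangle-free graph, and leveraging the vertex-transitivity of the Petersen and Heawood graphs to pin down the resulting graph up to isomorphism. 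The ``slight restatement'' announced in the paper will then amount to repackaging the output of this classification in a form convenient for the total-colouring argument of Section~\ref{s:total-colouring}.
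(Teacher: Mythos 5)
The paper does not prove this statement at all: Theorem~\ref{th:1} is quoted verbatim from Trotignon and Vu{\v s}kovi{\'c}~\cite{tv}, where its proof occupies a substantial part of a long paper. So there is no in-paper argument to compare yours against, and your proposal has to stand on its own as a proof of the structure theorem. It does not: it is an outline in which every load-bearing step is asserted rather than established. The degree-$2$ reduction (``$\{a,b\}$ is a natural candidate for a special 2-cutset'') ignores that a special 2-cutset requires both vertices to have degree at least~$3$ and both sides to have at least two vertices --- the neighbours of a degree-$2$ vertex in $K_{2,3}$, or in any strongly 2-bipartite graph, do not yield one, which is exactly why such graphs are basic. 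The triangle-aggregation claim needs an actual argument (an edge of a triangle lying on a longer cycle is a chord of that cycle, but not obviously the \emph{unique} chord). And the final classification --- that an undecomposable, triangle-free, non-sparse unichord-free graph is an induced subgraph of Petersen or Heawood --- is the entire substance of the theorem and is explicitly deferred (``the hardest step will be precisely this last one'').

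Beyond incompleteness, the case split is structurally wrong. The Heawood graph is bipartite, cubic, of girth~$6$, not strongly 2-bipartite (no side of its bipartition consists of degree-$2$ vertices), and admits no 1-cutset, special 2-cutset, or proper 1-join; the same holds for several of its induced subgraphs. Yet your bipartite branch offers only the outcomes ``strongly 2-bipartite'' or ``proper 1-join'', so it would misclassify the Heawood graph, while your non-bipartite branch is the only place you allow the outcome ``induced subgraph of $H$'' --- where it can never occur, since every induced subgraph of $H$ is bipartite. The Heawood outcome must live in the bipartite branch, and the Petersen outcome in the non-bipartite one; as written, the plan cannot terminate correctly even if all the deferred steps were filled in.
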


The decomposition blocks with respect to 1-cutsets and special
2-cutsets are defined below (we do not use here the blocks with
respect to proper 1-joins).

The \emph{block} $G_X$ (resp.\ $G_Y$) of a graph $G$ with respect to a
1-cutset with split $(X, Y, v)$ is $G[X\cup \{ v \} ]$ (resp.\ $G[Y\cup
\{ v\} ]$).

The \emph{blocks} $G_X$ and $G_Y$ of a graph $G$ with respect to a
special 2-cutset with split $(X, Y, a, b)$ are defined as follows.  If
there exists a vertex $c$ of $G$ such that $N_G(c)=\{ a,b \}$, then
let $G_X=G[X \cup \{ a,b,c \}]$ and $G_Y=G[Y \cup \{ a,b,c \}]$.
Otherwise, block $G_X$ (resp.\ $G_Y$) is the graph obtained by taking
$G[X\cup \{ a,b\}]$ (resp.\ $G[Y\cup \{ a,b\}]$) and adding a new
vertex $c$ adjacent to $a, b$.  Vertex $c$ is called the \emph{marker}
of the block $G_X$ (resp.\ $G_Y$).

The decomposition blocks of a unichord-free graph with respect to
1-cutsets and special 2-cutsets are constructed in such a way that
they remain unichord-free~\cite{tv}. Additionally, the decomposition
blocks of a $\{$square,unichord$\}$-free graph are themselves
$\{$square,unichord$\}$-free~\cite{Edge-tv}.

\subsection{Restated decomposition theorem}
\label{s:revisited}

We derive a restatement of the decomposition result for unichord-free
graphs that fits better to our total-colouring purposes (possibly
for other purposes as well).  The proposed restatement needs the
following notion of \emph{proper} 2-cutset.

A \emph {proper 2-cutset} of a graph $G=(V,E)$ is a pair of
non-adjacent vertices $a, b$ such that $V$ can be partitioned into
sets $X$, $Y$ and $\{ a,b \}$ so that: $|X|\geq 2$, $|Y| \geq 2$;
there is no edge between $X$ and $Y$; and both $G[X \cup \{ a,b \}]$
and $G[Y \cup \{ a,b \}]$ contain an $ab$-path but none of them is an
$ab$-path.  We say that $(X, Y, a, b)$ is a \emph{split} of this
proper 2-cutset.  Note that a proper 2-cutset is a particular kind of
special 2-cutset (so we may still use the notion of block of
decomposition as defined previously).

A \emph{branch vertex} of a graph is any vertex of degree at least 3,
and we call \emph{branch} any path whose endvertices are branch
vertices and whose internal vertices are not. Observe that a
2-connected graph that is not a cycle can be edge-wise partitioned into
its branches.  A graph is \emph{sparse} if its branch vertices form a
stable set (so every edge is incident to at least one vertex of degree
at most~2).  Note that every strongly 2-bipartite graph is sparse.
The \emph{reduced graph} of a 2-connected graph $G$ that is not a
cycle is the graph obtained from $G$ by contracting every branch of
length at least 3 into a branch of length 2.

A \emph{2-extension} of a graph $G$ is any graph obtained by
(first) deleting vertices from $G$ and (second) subdividing edges
incident to at least one vertex of degree~2. Note that the
unichord-free class is closed under taking 2-extensions.

We can now restate the decomposition theorem of Trotignon and Vu{\v
  s}kovi{\'c} for unichord-free graphs.  The difference with the
original theorem is that we use the more precise ``proper'' 2-cutset
instead of the ``special'' 2-cutset.  The price to pay for that is an
extension of the basic classes: instead of the strongly 2-bipartite
graphs, induced subgraphs of Petersen and induced subgraphs of
Heawood, we have to use the less precise sparse graphs, 2-extensions of
Petersen and 2-extensions of Heawood respectively.  Another small
difference is that cycles do not form a separate basic class anymore since
they are sparse.

\begin{theorem}
  \label{t:newdecomposition}
  If $G$ is a connected unichord-free graph, then either $G$ is a
  complete graph, or a sparse graph, or a 2-extension of Petersen
  or Heawood graph, or $G$ has a 1-cutset,
  a proper 2-cutset, or a proper 1-join.
\end{theorem}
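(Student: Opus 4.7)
The plan is to derive Theorem~\ref{t:newdecomposition} from Theorem~\ref{th:1} by a case analysis on which structure Theorem~\ref{th:1} supplies for a connected unichord-free graph~$G$. First I would verify that each basic class of Theorem~\ref{th:1} fits into a basic class of Theorem~\ref{t:newdecomposition}. Complete graphs stay complete. A cycle is sparse because its branch-vertex set is empty and hence vacuously stable. A strongly 2-bipartite graph with bipartition $(X,Y)$ is sparse because every vertex of degree at least $3$ lies in $Y$, and $Y$ is a stable set. An induced subgraph of the Petersen (resp.\ Heawood) graph is a 2-extension of the Petersen (resp.\ Heawood) graph obtained by deleting vertices but performing no subdivision. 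Finally, the 1-cutset and proper 1-join conclusions of Theorem~\ref{th:1} coincide with the corresponding conclusions of Theorem~\ref{t:newdecomposition}, and a special 2-cutset that happens to be proper is already a proper 2-cutset. The only non-routine case is therefore a non-proper special 2-cutset.

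For that case, let $(X,Y,a,b)$ be a non-proper special 2-cutset with, say, $G[Y\cup\{a,b\}]$ an $ab$-path $P_Y$ whose $|Y|\geq 2$ internal vertices all have degree~$2$ in~$G$. I would proceed by induction on $|V(G)|$. Form the block $G_X=G[X\cup\{a,b,c\}]$, where $c$ is either an existing vertex of $G$ with $N_G(c)=\{a,b\}$, or a new marker adjacent to both $a$ and~$b$. The excerpt recalls that $G_X$ is unichord-free, and $|V(G_X)|<|V(G)|$ since $|Y|\geq 2$, so the induction hypothesis provides one of the decompositions of Theorem~\ref{t:newdecomposition} for $G_X$. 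Observe that $G$ is obtained from $G_X$ by (essentially) replacing the degree-$2$ vertex~$c$ with the longer path~$P_Y$.

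The bulk of the work is to translate each such decomposition of $G_X$ back to $G$. If $G_X$ is sparse, then replacing $c$ by $P_Y$ only inserts degree-$2$ vertices along an existing branch and so keeps~$G$ sparse. If $G_X$ is a 2-extension of Petersen or Heawood, then the replacement is a further subdivision of edges already incident to a degree-$2$ vertex (namely~$c$), so~$G$ remains a 2-extension of the same basic graph. The case where $G_X$ is complete is ruled out because $c$ has degree~$2$ in $G_X$ while $|V(G_X)|\geq 5$. If $G_X$ admits a 1-cutset, a proper 2-cutset, or a proper 1-join whose data avoids the marker~$c$, the identical cut is a valid cut of the same type in~$G$, because the $P_Y$ bridge between~$a$ and~$b$ cannot reconnect anything that a cut of $G_X$ separates.

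The main obstacle is precisely the lifting step when a cut of~$G_X$ passes through the marker~$c$ (or interacts with~$a,b$ through~$c$). In such a situation the role of~$c$ must be transferred to an appropriate vertex of~$P_Y$, and one must verify that the properness conditions survive: the size bounds $|X'|,|Y'|\geq 2$, the non-adjacency of the cutset vertices, and, for a proper 2-cutset, that neither side reduces to a bare $ab$-path. I expect this case analysis to be mechanical but delicate; in each configuration, either the translated cut is a legitimate 1-cutset, proper 2-cutset, or proper 1-join of~$G$, or else the forced global structure of~$G_X$ puts~$G$ itself directly into one of the basic classes (sparse or 2-extension of Petersen/Heawood) of Theorem~\ref{t:newdecomposition}.
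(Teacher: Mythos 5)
Your reduction of the basic classes and of the cuts that avoid the marker is fine, and you have correctly isolated the only non-routine outcome of Theorem~\ref{th:1}, namely a special 2-cutset $(X,Y,a,b)$ one of whose sides is an $ab$-path. But at exactly that point your proof stops being a proof: you replace the side $G[Y\cup\{a,b\}]$ by a marker $c$, invoke induction on the block $G_X$, and then write that the lifting of a decomposition of $G_X$ that passes through $c$ is ``mechanical but delicate'' and that you ``expect'' each configuration to resolve. That case analysis is the entire content of your inductive step, and it is not trivial. For instance, if $G_X$ has a proper 1-join with split $(X',Y',A,B)$ and $c\in A$, then $B\subseteq N_{G_X}(c)=\{a,b\}$ forces $B=\{a,b\}$; replacing $c$ by the path destroys the join (and may leave $|A\setminus\{c\}|=1$), so you must manufacture a different decomposition of $G$ --- typically a proper 2-cutset at $\{a,b\}$ --- and then verify the size bounds and the ``no side is a bare path'' condition, which can degenerate. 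Similarly, a proper 2-cutset of $G_X$ through $c$ must be re-anchored at an internal vertex of the path, with the same properness checks. None of this is done, so the argument has a genuine gap precisely where the theorem's added precision (``proper'' versus ``special'') has to be earned.

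It is worth contrasting this with the paper's route, which makes the problematic case disappear rather than confronting it. The paper first passes to the reduced graph $G'$ of a 2-connected $G$ (contracting every branch of length at least~3 to length~2) and applies Theorem~\ref{th:1} to $G'$. In a reduced 2-connected graph, a side of a special 2-cutset that is an $ab$-path would be a branch of length at least~3, which cannot exist; hence every special 2-cutset of $G'$ is automatically proper, and it lifts to a proper 2-cutset of $G$ because un-contracting branches preserves all the required properties. The basic classes of $G'$ translate to sparse graphs and 2-extensions exactly as in your first paragraph. The only residual work is the proper 1-join outcome for $G'$, which the paper handles by choosing the split minimizing the number of degree-2 vertices in $A\cup B$ and showing that a degree-2 vertex in $A\cup B$ forces either a smaller join, a special 2-cutset, or a very small sparse graph. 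If you want to salvage your block-based induction you must write out the marker case analysis in full; otherwise the reduced-graph preprocessing is the cleaner way to the statement.
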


\begin{proof}
  We may assume that $G$ is not a cycle and is 2-connected (for
  otherwise, it is sparse or has a 1-cutset). Let $G'$ be the reduced
  graph of~$G$. Observe that $G$ can be obtained from $G'$ by
  subdividing edges incident to at least one vertex of degree 2, that $G'$ is not a cycle, and
  that $G'$ is 2-connected.  Also, $G'$ is unichord-free, since
  contracting a path of length at least~3 into a path of length~2
  does not create nor destroy chords of cycles. 

  We apply the decomposition Theorem~\ref{th:1} to $G'$.  If $G'$ is a
  complete graph on at least~4 vertices, then in fact $G = G'$, so we
  are done.  If $G'$ is 
  a strongly 2-bipartite graph then $G$ is sparse.  If $G'$
  is an induced subgraph of the Petersen graph or of the Heawood
  graph, then $G$ is a 2-extension of Petersen or
  Heawood graph.  If $G'$ has a special 2-cutset $\{a,b\}$ with split $(X,Y,a,b)$, 
  then $\{a,b\}$ is a proper 2-cutset of $G$ (since $G'$ is reduced, no
  side $G[X\cup\{a,b\}]$ or $G[Y\cup\{a,b\}]$ of a special 2-cutset in $G'$ can be a path,
  because this would imply that $|X|=1$ or $|Y|=1$).

  Finally consider the case where $G'$ has a proper 1-join with split
  $(X,Y,A,B)$.  Suppose that $(X, Y, A, B)$ is chosen so that the
  number $k$ of vertices of degree 2 in $A \cup B$ is minimal.  If
  $k=0$, then all vertices in $A \cup B$ have degree at least~3, so
  $G$ is obtained from $G'$ by subdividing edges with both ends in $X$
  or both ends in $Y$, so that the edges between $A$ and $B$ still
  form a proper 1-join in $G$.  Hence, we may assume that in $G'$,
  there is a vertex $u\in A$ (up to symmetry) of degree~2.  It follows
  that $|B| = 2$, say $B = \{v, w\}$ and $B$ is the neighborhood of
  $u$ in $G'$.  If $A \geq 3$, then $(X \setminus \{u\}, Y \cup \{u\},
  A \setminus \{u\}, B)$ is a split of a proper 1-join of $G'$ that
  contradicts the minimality of $k$.  So, $|A| = 2$, say $A = \{u,
  u'\}$.  Since $G$ is 2-connected, $u'$ cannot be a 1-cutset, so in
  fact $X = A = \{u, u'\}$.  If $|Y| \geq 4$, then $(X, Y \setminus
  \{v, w\}, v, w)$ is a split of a special 2-cuset of $G'$, so we are
  done as in the previous paragraph.  Hence, $|Y| \leq 3$.  Now, all
  vertices in $G'$ have degree at most 2, except possibly $v$ and $w$
  that are non-adjacent.  It follows that $G'$ is sparse, and so is
  $G$ (in fact, $4 \leq |V(G')| \leq 5$, and $G'$ is isomorphic to
   the square or to $K_{2, 3}$).  \qed
\end{proof}

A more precise theorem is obtained for 2-connected square-free graphs.

\begin{theorem}
  \label{t:newdecompositionNoS}
  If $G$ is a 2-connected \{square, unichord\}-free graph, then either~$G$ 
  is a complete graph, or a sparse graph, or a 2-extension of 
  Petersen or Heawood graph,
  or has a proper 2-cutset.
\end{theorem}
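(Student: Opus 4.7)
The plan is to apply the restated decomposition Theorem~\ref{t:newdecomposition} directly to $G$ and use the square-free hypothesis to rule out the only two outcomes that are not listed in the present statement, namely the 1-cutset and the proper 1-join.

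First, since $G$ is 2-connected by hypothesis, $G$ cannot have a 1-cutset, so that outcome of Theorem~\ref{t:newdecomposition} is immediately eliminated. It remains to show that a \{square, unichord\}-free graph cannot admit a proper 1-join; then the theorem follows, because the surviving outcomes of Theorem~\ref{t:newdecomposition} are exactly the ones listed: complete graph, sparse graph, 2-extension of Petersen or Heawood, or proper 2-cutset.

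The key step is the observation that the definition of a proper 1-join forces a square. Suppose for contradiction that $(X,Y,A,B)$ is a split of a proper 1-join in $G$. By definition, $|A|\geq 2$ and $|B|\geq 2$, both $A$ and $B$ are stable sets, and every vertex of $A$ is adjacent to every vertex of $B$. Pick any $a_1,a_2\in A$ and any $b_1,b_2\in B$; since $a_1a_2\notin E(G)$ and $b_1b_2\notin E(G)$ while $a_1b_1, b_1a_2, a_2b_2, b_2a_1 \in E(G)$, the four vertices induce a square $a_1b_1a_2b_2a_1$ in~$G$, contradicting the square-free hypothesis.

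There is essentially no hard step here; the work was already done in establishing Theorem~\ref{t:newdecomposition}. The only subtlety to keep in mind while writing is the one just exploited: the definition of proper 1-join insists on $|A|,|B|\geq 2$ together with stability of $A,B$ and complete bipartite adjacency between them, which is exactly the configuration that produces an induced~$C_4$. Hence the short argument above suffices and the theorem follows immediately from Theorem~\ref{t:newdecomposition}.
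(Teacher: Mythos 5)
Your proposal is correct and matches the paper's own argument: the paper also derives the statement directly from Theorem~\ref{t:newdecomposition}, noting that a proper 1-join forces an induced square on two vertices of $A$ and two of $B$ (the 1-cutset outcome being excluded by 2-connectivity). No gaps.
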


\begin{proof}
  Follows directly from Theorem~\ref{t:newdecomposition} because a
  1-join cannot occur in a square-free graph (if a graph has a 1-join,
  it must contain a square, formed by any two vertices from $A$ and
  two vertices from $B$).\qed  
\end{proof}

The following lemma restates the extremal decomposition of~\cite{tv}.

\begin{lemma}
\label{l:newextremal}
Let $G$ be a 2-connected $\{$square,unichord$\}$-free graph and let
$(X, Y, a, b)$ be a split of a proper 2-cutset of $G$ such that $|X|$
is minimum among all possible such splits.  Then $a$ and $b$ both have
at least two neighbors in $X$, and $G_{X}$ is a sparse graph or
is a 2-extension of Petersen or Heawood graph.
\end{lemma}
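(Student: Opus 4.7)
I would prove both conclusions by contradiction against the minimality of $|X|$, in each case exhibiting a proper 2-cutset of $G$ with a strictly smaller $X$-side.

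For the degree statement, suppose $a$ has a unique neighbour $a^{*}$ in $X$ (the case of $b$ is symmetric). Since $G$ is 2-connected and $a\not\sim b$, vertex $a$ must also have a neighbour $y$ in $Y$. The plan is to use $\{a^{*},b\}$ as a new cutset with split $(X\setminus\{a^{*}\},\,Y\cup\{a\},\,a^{*},b)$. If $a^{*}\not\sim b$, this is a split of a proper 2-cutset of $G$: the $a^{*}b$-path on the $X$-side is obtained by truncating the old $ab$-path, the opposite side contains the path $a^{*}$-$a$-$y$-$\cdots$-$b$, and neither side is a pure $a^{*}b$-path --- on the $X$-side because this would make $G[X\cup\{a,b\}]$ itself an $ab$-path, contradicting properness of the original split. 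The boundary case $|X|=2$ is ruled out by direct inspection: it forces $G[X\cup\{a,b\}]$ to be a path, again contradicting properness. Otherwise $|X\setminus\{a^{*}\}|=|X|-1<|X|$, contradicting minimality. If $a^{*}\sim b$, I would take a shortest $ab$-path through $Y$, say $a$-$y$-$P$-$b$, and combine it with $a$-$a^{*}$-$b$ to obtain a chordless cycle (cross-edges $X$-$Y$ are forbidden by the split, and internal shortcuts are excluded by path minimality); length $4$ yields a square, contradicting square-freeness, while the length-$\geq 5$ case can be reduced to the previous smaller-cutset argument by choosing a suitable $b'$ on the $Y$-path.

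For the structural statement, I would apply Theorem~\ref{t:newdecompositionNoS} to $G_{X}$, which is 2-connected and $\{$square,unichord$\}$-free by the preservation results cited before the lemma. The cases to exclude are $G_{X}$ being complete and $G_{X}$ having a proper 2-cutset. The complete case fails because the marker $c$ has $\deg_{G_{X}}(c)=2$ while $|V(G_{X})|\geq 5$, so $G_{X}$ cannot be $K_{n}$ with $n\geq 4$. For a hypothetical proper 2-cutset $\{a',b'\}$ of $G_{X}$ with split $(X',Y',a',b')$, I would swap so that $c\notin Y'$; since $c$'s only neighbours in $G_{X}$ are $a$ and $b$, this forces $\{a,b\}\subseteq X'\cup\{a',b'\}$ and hence $Y'\subseteq X$. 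Then $\{a',b'\}$ is a proper 2-cutset of $G$ with split $(Y',\,V(G)\setminus(Y'\cup\{a',b'\}),\,a',\,b')$: the $a'b'$-path on the $Y'$-side is inherited from $G_{X}$ (the marker $c$ is absent, so there is nothing extra to check), and on the opposite side one concatenates an $a'$-to-$a$ path in $X'\cup\{a',b'\}$, the $ab$-path through $Y$, and a $b$-to-$b'$ path in $X'\cup\{a',b'\}$. Since $|Y'|<|X|$, this contradicts minimality.

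The main obstacle will be the careful case analysis in the second step, where one must handle the coincidences $\{a',b'\}\cap\{a,b,c\}\neq\emptyset$ separately, verify $a'\not\sim b'$ in $G$ (only the edges $ca$ and $cb$ are added when passing to $G_{X}$, so this is essentially automatic), and ensure that the ``no pure $a'b'$-path'' condition is preserved on both sides when $c$ is removed. The sub-case $a^{*}\sim b$ in the first step is a second delicate point, as the direct cutset route is blocked there and one must exploit square-freeness together with a shortest-path argument to close it.
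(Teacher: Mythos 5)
Your overall strategy matches the paper's (contradict the minimality of $|X|$ for both claims, then apply Theorem~\ref{t:newdecompositionNoS} to $G_X$), but two steps do not go through as written. The more serious one is the sub-case $a^{*}\sim b$. The cycle you build there uses the edge $a^{*}b$ as a \emph{cycle edge}, so when it has length at least $5$ it contradicts nothing: chordless long cycles are perfectly legal in $\{$square,unichord$\}$-free graphs, and ``choosing a suitable $b'$ on the $Y$-path'' cannot yield a smaller $X$-side, since any cutset with $b'\in Y$ still leaves essentially all of $X$ on one side (and $\{a^{*},b'\}$ need not even separate, because $b$ may have further neighbours in $X\setminus\{a^{*}\}$). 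The paper's argument differs exactly where yours stalls: since $G[X\cup\{a,b\}]$ contains an $ab$-path but is not one, and $G$ is 2-connected, there is an induced path of length at least two from $b$ to $a^{*}$ with interior in $X$; closing it with the edge $a^{*}a$ and an $ab$-path through $Y$ produces a cycle in which $a^{*}b$ is the \emph{unique chord}, contradicting unichord-freeness. In short, $a^{*}b$ must be made a chord, not an edge of the cycle.

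The second gap is the coincidence $c\in\{a',b'\}$ in your treatment of a proper 2-cutset of $G_X$, which you flag but do not resolve and which ``swap so that $c\notin Y'$'' does not cover: if the degree-2 vertex $c$ is itself a cut vertex of the inner split, it lies in neither side, it has one neighbour on each side, and then $a$ and $b$ land on opposite sides, so $Y'\not\subseteq X$ and your smaller split of $G$ evaporates. The paper dispatches this by first normalising the inner 2-cutset so that both of its vertices have degree at least $3$ --- possible by the same replacement argument as in the first part --- which excludes the marker outright. With that normalisation, your remaining case split ($\{a',b'\}$ equal to, meeting, or disjoint from $\{a,b\}$) does line up with the paper's three sub-cases, and ``$Y'$ becomes the new, strictly smaller $X$-side'' is the paper's conclusion as well. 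Two minor remarks: the paper does not simply cite preservation for square-freeness of $G_X$ but reproves it (a putative square through the marker is converted into a unichord), and it excludes the complete case just because $a\not\sim b$ in $G_X$; your degree-of-$c$ argument also works there.
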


\begin{proof}
  First, we show that $a$ and $b$ both have  at least two neighbors in $X$. 
  Suppose that one of $a,b$, say $a$,  has a
  unique neighbor $a' \in X$.  We claim that $a'$ is not adjacent to
  $b$.  For otherwise, since $G[X \cup \{a, b\}]$ does not induce a
  path and $G$ is 2-connected, there is a path in $X$ from $b$ to
  $a'$, that together with a path from $a$ to $b$ with interior in $Y$
  form a cycle with a unique chord: $a'b$.  So, $a'$ is not adjacent
  to $b$.  Hence, by replacing $a$ by $a'$, we obtain a proper
  2-cutset that contradicts the minimality of $X$. 
   
  Denote by $m$ the marker of $G_{X}$. One can easily check that
  the block $G_{X}$ is a 2-connected unichord-free graph.  Also,
  $G_X$ is square-free.  Indeed, since $G$ is square-free, a square in
  $G_X$ must be formed by $m, a, b$ and a vertex $x \in X$ adjacent to
  $a$ and $b$.  If $x$ has degree 2, there is a contradiction, because
  from the definition of a block of decomposition, $x$ should have
  been used as the marker vertex.  So, $x$ has a neighbor $x' \in X$.
  Since $G_X$ is 2-connected, in $G_X \setminus x$, there is a path $P
  = x' \dots y$ such that $y$ has a neighbor in $\{a, b\}$.  We choose
  such a path of minimum length.  Since $G$ is square-free, $y$ has
  in fact a unique neighbor in $\{a, b\}$.  Hence, $V(P) \cup \{a, b,
  x, m\}$ is a cycle with a unique chord in $G_X$, a contradiction.    
  
  Suppose $G_X$ has a proper 2-cutset with split $(X_1,X_2,u,v)$.
  Choose it so that $u$ and $v$ both have degree at least 3 (this is
  possible as explained at the beginning of the proof).  Note that
  $m\notin \{u, v\}$.  Observe that, if $\{a,b\}=\{u,v\}$, then
  $(X_1,Y\cup X_2,a,b)$ would be a split of a proper 2-cutset of $G$,
  contradicting the minimality of $|X|$. So
  $\{a,b\}\neq\{u,v\}$. Assume w.l.o.g.\ $b \not \in \{ u,v \}$.

  Suppose $a \not \in \{ u,v \}$.  Then w.l.o.g.\ $\{ a,b \} \subseteq
  X_1$, and hence $(X_1 \cup Y, X_2,u,v)$ --- with $m$ removed if $m$
  is not an original vertex of $G$ --- is a split of a proper 2-cutset
  of $G$, contradicting the minimality of $|X|$.  Therefore $a \in \{
  u,v \}$. Then w.l.o.g.\ $m\in X_{1}$, and hence $(X_1 \cup
  Y,X_2,u,v)$ --- with $m$ removed if $m$ is not an original vertex of
  $G$ --- is a proper 2-cutset of $G$ whose block of decomposition
  $G_{X_2}$ is smaller than $G_{X}$, contradicting the minimality of
  $|X|$.

  In any case, we reach a contradiction which means that $G_X$ has no
  proper 2-cutset.  Since $a$ and $b$ are not adjacent, $G_X$ is not a
  complete graph.  Hence, by Theorem~\ref{t:newdecompositionNoS},
  $G_X$ must be sparse or a 2-extension of Petersen or Heawood graph.\qed
\end{proof}

\section{Total-colouring $\{$square,unichord$\}$-free graphs with maximum degree~3}
\label{s:total-colouring}

In the present section, we prove that the only Type~2
$\{$square,unichord$\}$-free graph of maximum degree 3 is $K_4$.

\begin{theorem}
\label{t:main}
Every $\{$square,unichord$\}$-free graph with maximum degree at most 3
different from $K_4$ is 4-total-colourable.
\end{theorem}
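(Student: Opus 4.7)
The plan is to prove the theorem by induction on $|V(G)|$, combining Theorem~\ref{t:newdecompositionNoS} with the extremal split produced by Lemma~\ref{l:newextremal}. The base cases are the indecomposable classes in which $G$ itself already appears: the small complete graphs $K_1$, $K_2$, $K_3$ (since $K_4$ is excluded and $K_n$ with $n\geq 5$ has $\Delta\geq 4$), sparse graphs with $\Delta\leq 3$, and 2-extensions of the Petersen and Heawood graphs. Sparse graphs will be handled branch-by-branch, exploiting that branch vertices are pairwise non-adjacent: at a degree-3 vertex one colours the three incident branches greedily and then picks the unique missing colour as the vertex colour. For 2-extensions of $P$ and $H$, I would start from the explicit 4-total-colourings shown in Figure~\ref{f:colouring-basic} and adapt them under vertex deletion and edge subdivision; vertex deletion is harmless, and each subdivision inserts a new degree-2 vertex whose three incident elements can be coloured using the colour missing at the subdivided edge's degree-2 endpoint, with a local Kempe exchange when necessary.

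For the inductive step, if $G$ is not 2-connected, a 1-cutset $v$ splits $G$ into two smaller $\{$square,unichord$\}$-free subgraphs with $\Delta\leq 3$ sharing only $v$; neither can equal $K_4$ because $K_4$ is 3-connected. I would colour each piece by induction and then permute colours in one of the two blocks so that the colour of $v$ and of its (at most three) incident edges coincide across the two colourings. If $G$ is 2-connected but not basic, Lemma~\ref{l:newextremal} provides a proper 2-cutset $(X,Y,a,b)$ with $|X|$ minimum and $G_X$ basic. The degree bound $\Delta\leq 3$ together with $d_X(a),d_X(b)\geq 2$ forces $a$ and $b$ to have exactly two neighbours in $X$ and exactly one in $Y$, so $a,b$ have degree $3$ in $G_X$ and degree $2$ in $G_Y$. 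Colouring the strictly smaller block $G_Y$ by induction then fixes the colours at $a$, $b$ and at the unique $Y$-side edge at each of them, which in turn prescribes the colours that the marker edges $ca$, $cb$ must receive in any compatible 4-total-colouring of $G_X$.

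The main obstacle will be the flexibility statement for the basic blocks that underpins the merging step: given $G_X$ basic, $c$ a degree-2 marker adjacent to non-adjacent degree-3 vertices $a,b$, and any locally consistent prescription of the colours of $a$, $b$, $ca$, $cb$, one must be able to extend it to a 4-total-colouring of $G_X$. For sparse $G_X$ this reduces to a routine path-colouring exercise, since the interior of every branch consists of degree-2 vertices and thus offers ample freedom. For $G_X$ a 2-extension of $P$ or $H$, I plan to exploit the vertex-transitivity of the reference graphs --- which allows $c$ to be placed at any vertex of $P$ or $H$ --- combined with a finite case analysis of the colour patterns around $\{a,b,c\}$ in the base colouring of Figure~\ref{f:colouring-basic}, possibly after Kempe switches; I expect this case analysis, together with the bookkeeping needed when edge subdivisions propagate the constraints from $a$ and $b$ into the extended graph, to be the most technical part of the proof. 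Once this flexibility lemma is established, the merging across the 2-cutset is immediate and closes the induction.
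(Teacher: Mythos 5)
Your overall skeleton --- induction on $|V(G)|$, reduction to 2-connected graphs via 1-cutsets, Theorem~\ref{t:newdecompositionNoS} plus the extremal proper 2-cutset of Lemma~\ref{l:newextremal} with $G_X$ basic, and extension lemmas for sparse graphs and 2-extensions of Petersen/Heawood --- matches the paper. The gap is in the merging step across the proper 2-cutset. You colour $G_Y$ first by induction and then ask for a \emph{universal flexibility} lemma: that $G_X$ extends \emph{any} locally consistent prescription of the colours of $a$, $b$, $ca$, $cb$. This is where the argument breaks. An arbitrary inductive colouring of the $Y$-side block can give $aa'$ and $bb'$ the same colour (they are non-adjacent in $G$), in which case the adjacent marker edges $ca$ and $cb$ cannot both inherit these colours; and it can give $a$, $b$, $aa'$, $bb'$ four distinct colours, leaving no colour at all for the marker vertex. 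Even for the patterns that survive these objections, the flexibility claim is much stronger than anything established for the basic graphs, and verifying it would require a case analysis over all patterns, all 2-extensions, and all placements of $a,b$ --- with no guarantee of success (compare Lemma~\ref{l:extPath}, where already on a path the pattern $ABAB$ is \emph{not} extendable).

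The paper's proof avoids demanding any flexibility from $G_X$ beyond the single pattern $a=b=1$, $ua=2$, $ub=3$ (which is all that Lemmas~\ref{l:extSparse} and~\ref{l:extPH} provide). Instead it \emph{normalizes the $Y$-side}: it proves as a claim that $G[Y\cup\{a,b\}]$ admits a 4-total-colouring in which $a$ and $b$ both get colour~1, $aa'$ gets~2 and $bb'$ gets~3. The device for this is to build $G_Y$ by \emph{contracting} $a$ and $b$ into a single vertex $u_{ab}$, so that the inductive colouring automatically assigns $a$ and $b$ equal colours and $aa'$, $bb'$ distinct colours; when this contraction would create a square (a vertex of $Y$ adjacent to both $a'$ and $b'$), a separate short recolouring argument on $G[Y\setminus\{y\}]$ is carried out. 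Some such normalization of the $Y$-side colouring (or an honest proof of a correctly restricted flexibility statement for the $X$-side) is needed to close your induction; as written, the proposal does not supply it. Your base cases and 1-cutset handling are essentially fine, though your treatment of subdivisions in 2-extensions of $P$ and $H$ by ad hoc Kempe exchanges is also left unverified, whereas the paper exhibits explicit colourings in which every relevant path of length~2 avoids the pattern $ABAB$ so that Lemma~\ref{l:extPath} applies.
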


For the proof of Theorem~\ref{t:main}, we need three lemmas ---
Lemmas~\ref{l:extPath},~\ref{l:extSparse} and~\ref{l:extPH} --- that
give sufficient conditions to extend a partial 4-total-colouring of a
graph to a 4-total-colouring of this graph. Lemmas~\ref{l:extPath}
and~\ref{l:extSparse} are proved in~\cite{Chordless}; for the sake of
completeness, all proofs are included here.

\begin{lemma}
  \label{l:extPath}
  Let $k\geq 3$ be an integer, and let  $P= p_1 \dots p_k$ be a path.  Suppose
  that $p_1,  p_1p_2, p_{k-1}p_k, p_k$ are coloured with 2 or 3 colours,
  respectively $c_1$, $c_2$, $c_{2k-2}$ and $c_{2k-1}$, such that
  adjacent elements receive different colours and we do not have 

  $$c_1 = c_{2k-2}  \text{ and }  c_2 = c_{2k-1}.$$  

  This can be extended to a 4-total-colouring of $P$.
\end{lemma}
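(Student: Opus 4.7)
The plan is to prove the lemma by induction on $k \geq 3$.

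For the base case $k=3$, only the vertex $p_2$ remains uncoloured. It must differ from the two incident edges, whose colours are $c_2$ and $c_4$. Since at most two of the four available colours are forbidden, at least two choices of $c_3$ remain, so the colouring extends. (The ``not forbidden'' hypothesis is not invoked here.)

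For the inductive step $k \geq 4$, I will assign a colour $c_3$ to $p_2$ and a colour $c_4$ to the edge $p_2p_3$ so that the induction hypothesis applies to the sub-path $P' = p_2 p_3 \ldots p_k$ of length $k-1 \geq 3$, with pre-coloured quadruple $(c_3, c_4, c_{2k-2}, c_{2k-1})$. For this, the pair $(c_3, c_4)$ must satisfy: $c_3 \neq c_2$ and $c_4 \neq c_2$ (adjacency with the already-coloured edge $p_1p_2$); $c_3 \neq c_4$; the set $\{c_3, c_4, c_{2k-2}, c_{2k-1}\}$ has at most three distinct colours; and we do not have both $c_3 = c_{2k-2}$ and $c_4 = c_{2k-1}$. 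The key idea is to force $c_3 \in \{c_{2k-2}, c_{2k-1}\}$, which immediately ensures the ``2 or 3 colours'' constraint for $P'$.

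The main (and only real) obstacle is arranging the choice so that both ``at most 3 colours'' and ``not forbidden'' hold simultaneously for the shorter path, while respecting $c_3 \neq c_2$. I handle this by splitting on whether $c_{2k-1} = c_2$. If $c_{2k-1} \neq c_2$, set $c_3 = c_{2k-1}$; since $c_3 = c_{2k-1} \neq c_{2k-2}$, the forbidden conjunction for $P'$ fails in its first clause, and any $c_4 \in \{1,2,3,4\} \setminus \{c_2, c_3\}$ (two choices) works. If $c_{2k-1} = c_2$, then $c_{2k-2} \neq c_2$ by adjacency, so I set $c_3 = c_{2k-2}$; the forbidden conjunction for $P'$ would then require $c_4 = c_{2k-1} = c_2$, which is already excluded by $c_4 \neq c_2$, so again any $c_4 \in \{1,2,3,4\} \setminus \{c_2, c_3\}$ works.

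Once $(c_3, c_4)$ is fixed, all adjacency constraints at $p_1p_2$--$p_2$--$p_2p_3$ are met, and the induction hypothesis applied to $P'$ produces a valid 4-total-colouring of the remaining elements, completing the extension.
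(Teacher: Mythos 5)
Your induction scheme is close in spirit to the paper's (the paper likewise fixes the colour of $p_2$ and of one more element and recurses on a shorter proper sequence), but your list of constraints on the pair $(c_3,c_4)$ omits one that is essential: in a total-colouring the vertex $p_2$ must also receive a colour different from the adjacent \emph{vertex} $p_1$, i.e.\ $c_3\neq c_1$. Your Case~1 rule violates this. Take $k=4$ with $c_1=1$, $c_2=2$, $c_{2k-2}=2$, $c_{2k-1}=1$: every hypothesis of the lemma holds (two colours are used, adjacent precoloured elements differ, and the forbidden conjunction fails because $c_1\neq c_{2k-2}$), yet since $c_{2k-1}=1\neq 2=c_2$ your recipe sets $c_3=c_{2k-1}=1=c_1$, colouring the adjacent vertices $p_1$ and $p_2$ alike. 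Your key idea of forcing $c_3\in\{c_{2k-2},c_{2k-1}\}$ cannot be rescued here, because in this example both of those colours lie in $\{c_1,c_2\}$ and are therefore unavailable for $p_2$; one must instead allow $c_3\notin\{c_{2k-2},c_{2k-1}\}$ and recover the ``at most three colours'' condition through the choice of the other new entry (this is essentially what the paper does, by first normalising so that some colour is absent from $\{c_1,c_2\}$, assigning it to $c_3$, and then splitting on whether $\{c_2,c_{2k-2}\}=\{1,2\}$). Case~2 of your step is sound: there the non-forbidden hypothesis is exactly what guarantees $c_3=c_{2k-2}\neq c_1$.

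The same oversight affects your base case: for $k=3$ the colour of $p_2$ must avoid all four of $c_1,c_2,c_4,c_5$ (the two adjacent vertices as well as the two incident edges), so your claim that ``at least two choices remain'' is false (e.g.\ $c_1=1$, $c_2=2$, $c_4=3$, $c_5=1$ leaves only colour~$4$), and the extension exists only because the ``2 or 3 colours'' hypothesis --- which you explicitly decline to invoke --- guarantees a free fourth colour. Finally, for $k=4$ the phrase ``any $c_4\in\{1,2,3,4\}\setminus\{c_2,c_3\}$ works'' overlooks that $p_2p_3$ is then adjacent to the precoloured edge $p_3p_4$, so $c_4\neq c_{2k-2}$ is an additional requirement; this one is harmless since at least one of the two candidates survives, but it must be checked.
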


\begin{proof}
  We view a total-colouring of $P$ as a sequence $c_1, \dots, c_{2k-1}$
  of integers from $\{1, 2, 3, 4\}$, that are the colours of the
  elements of $P$ as they appear when walking along $P$ from $p_1$ to
  $p_k$.  The sequence is \emph{proper} when any two numbers at
  distance at most~2 along the sequence have different values (this
  corresponds exactly to a total-colouring of the path).  What we need
  to prove is that when $c_1, c_2, c_{2k-2}, c_{2k-1}$ receive values
  among $1, 2, 3$, and these values are different for distinct
  elements at distance at most~2, and we do not have $c_1 = c_{2k-2}
  \text{ and } c_2 = c_{2k-1}$, then this can be extended to a proper
  sequence $c_1, \dots, c_{2k-1}$.

  We proceed by induction on $k$.  If $k=3$ then only $c_3$ has no
  value, and the fourth number 4 is available for it.  So suppose
  $k\geq 4$.  Among colours 1, 2, 3, at least one, say 3, is used at
  most once for $c_1, c_2, c_{2k-2}, c_{2k-1}$.  Up to symmetry we may
  assume that colour 3 is not used for $c_1, c_2$.  We put $c_3=3$.
  Now we consider two cases.

  If $\{c_2, c_{2k-2}\} = \{1, 2\}$, say $c_2=2, c_{2k-2}=1$, then we
  must have $c_1=1$ because 3 is not used for $c_1$, so $c_{2k-1} = 3$
  because otherwise we have $c_{2k-1} = 2$ implying $c_1 = c_{2k-2}
  \text{ and } c_2 = c_{2k-1}$ which is forbidden by assumption.  So,
  we put $c_{2k-3} = 2$.  By the induction hypothesis, we complete the
  sequence $c_2, \dots, c_{2k-2}$.
 
  If $\{c_2, c_{2k-2}\} \neq \{1, 2\}$, then we put $c_{2k-3} = 4$.
  We claim that $|\{c_2, c_3, c_{2k-3}, c_{2k-2}\}| \leq 3$.  Indeed,
  since $\{c_2, c_{2k-2}\} \neq \{1, 2\}$, either $c_2 = c_{2k-2}$
  (which proves the claim), or $c_{2k-2} = 3$ (which also proves the
  claim because $c_3=3$).  Also, $c_2 \neq c_{2k-3} = 4$. Hence, by
  the induction hypothesis, we complete the sequence $c_2, \dots,
  c_{2k-2}$.\qed
\end{proof}

\begin{lemma}
  \label{l:extSparse}
  Let $G$ be a 2-connected sparse graph of maximum degree~3.  Graph~$G$
  is 4-total-colourable.  Moreover suppose that $u$ is a vertex of
  degree~2 that has two neighbors $a, b$ of degree~3 and suppose that
  $a$, $b$, $au$, $ub$ receive respectively colours 1, 1, 2, 3.  This
  can be extended to a total-colouring of $G$ using $4$ colours.
\end{lemma}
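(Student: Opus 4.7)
The plan rests on the sparseness hypothesis: since every edge of $G$ is incident to a vertex of degree at most~$2$, the branch vertices (those of degree~$3$) form a stable set. Assuming $G$ is not a cycle (the cycle case is immediate), I colour every branch vertex with colour~$1$; this is automatically consistent, and it reduces the problem to colouring the remaining edges and the degree-$2$ vertices with colours from $\{1,2,3,4\}$ satisfying all adjacency and incidence constraints.

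Next, I focus on the edges incident to branch vertices. I want to assign them colours in $\{2,3,4\}$ so that at each branch vertex the three incident end-edges receive three distinct colours and, on every length-$2$ branch $a$--$u$--$b$, the two end-edges $au$ and $ub$ receive distinct colours (forced because they are adjacent at~$u$). Such an assignment amounts to a proper $3$-edge-colouring of the cubic multigraph~$R$ obtained by contracting each branch of~$G$ to an edge, subject to the extra ``distinct-at-length-$2$'' side condition. Once this is done, each branch $a=v_0,v_1,\dots,v_k=b$ of~$G$ has its two endpoints coloured~$1$ and its two end-edges coloured in $\{2,3,4\}$, so the boundary data $(c_1,c_2,c_{2k-2},c_{2k-1})=(1,c(av_1),c(v_{k-1}b),1)$ uses at most three colours, and the forbidden pattern of Lemma~\ref{l:extPath} cannot occur --- it would force an end-edge to be coloured~$1$. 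Hence Lemma~\ref{l:extPath} extends the colouring along each branch, yielding a $4$-total-colouring of~$G$.

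The main obstacle is the existence of the required $3$-edge-colouring of~$R$, since cubic multigraphs are not always $3$-edge-colourable (Petersen's graph being the paradigmatic obstruction, and its $1$-subdivision is a legitimate $2$-connected sparse graph of maximum degree~$3$). I would handle this by induction on $|E(G)|$. If~$G$ has a branch of length at least~$3$, I suppress an internal degree-$2$ vertex of that branch to obtain a smaller $2$-connected sparse graph~$G'$, $4$-total-colour~$G'$ by induction, and then reinsert the vertex using Lemma~\ref{l:extPath}. It remains to treat the base case in which every branch of~$G$ has length exactly~$2$, i.e., $G$ is the $1$-subdivision of a cubic multigraph~$R$. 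Here a case analysis disposes of the snark-like obstructions: whenever~$R$ admits a proper $3$-edge-colouring the construction above applies, and otherwise (notably when~$R$ is the Petersen graph) a $4$-total-colouring of~$G$ is obtained directly by lifting the explicit $4$-total-colouring shown in Figure~\ref{f:basicPetersen}.

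For the moreover clause, the prescribed pre-colouring $(a,b,au,ub)=(1,1,2,3)$ fits this construction directly: $a$ and $b$ are branch vertices coloured~$1$, and the two end-edges of the length-$2$ branch from~$a$ to~$b$ through~$u$ are required to receive the colours~$2$ and~$3$. I would choose the $3$-edge-colouring of~$R$ so that on this particular branch the half-edge at~$a$ receives colour~$2$ and the half-edge at~$b$ receives colour~$3$; the vertex~$u$ is then assigned the remaining colour~$4$, and every other branch is extended as above via Lemma~\ref{l:extPath}.
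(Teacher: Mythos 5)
Your overall skeleton matches the paper's proof (colour all branch vertices~1, colour the edges near branch vertices from $\{2,3,4\}$, then finish each branch with Lemma~\ref{l:extPath}), but the step where you produce that edge-colouring contains a genuine gap, and the gap is of your own making. You contract every branch to a single edge, obtaining a cubic multigraph $R$, and then face the non-3-edge-colourability of $R$ (snarks) as your ``main obstacle.'' This obstacle is illusory: what you actually need is not one colour per branch but one colour per \emph{end-edge}, with the three end-edges at each branch vertex pairwise distinct and the two edges of each length-2 branch distinct. Indeed, your own formulation is inconsistent --- a proper 3-edge-colouring of $R$ assigns a single colour to each branch, which \emph{forces} the two end-edges of every length-2 branch to coincide, violating your stated side condition on every such branch. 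The correct object to edge-colour is the reduced graph $G'$ (every branch contracted to length exactly~2, not~1). Since $G$ is sparse, every branch has length at least~2, so $G'$ is bipartite with bipartition (branch vertices, subdivision vertices), and K\"onig's theorem gives a proper 3-edge-colouring of $G'$ with colours $2,3,4$ unconditionally. This is exactly what the paper does, and it eliminates the snark issue entirely.

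Beyond being unnecessary, your proposed resolution of the obstacle does not close it. ``A case analysis disposes of the snark-like obstructions'' is not a finite case analysis: there are infinitely many 2-edge-connected cubic multigraphs that are not 3-edge-colourable, and lifting the explicit colouring of Figure~\ref{f:basicPetersen} handles only the Petersen graph itself, not the rest of the family. Your induction on branch length is also vaguer than it should be (when you reinsert a suppressed degree-2 vertex you must specify colours for two new adjacent edges consistently with up to four constraints at an endpoint of degree~3), though that part is repairable. The moreover clause is handled the same way in both arguments, by permuting the colours $2,3,4$ so that $au$ and $ub$ receive $2$ and $3$.
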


\begin{proof}
  Note that, because of its maximum degree, $G$ is not a cycle.  Let
  $G'$ be the reduced graph of $G$.  We first total-colour several
  elements of $G'$.  We give to all branch vertices of $G'$ colour~1.
  We edge-colour $G'$ with colours $2, 3, 4$ (up to a relabeling, it
  is possible to give colour $2, 3$ to $au$, $ub$ respectively).  This
  is possible, because $G'$ is bipartite and a classical theorem due
  to K\" onig says that any bipartite graph $H$ is $\Delta(H)$-edge
  colourable.  We extend this to a total-colouring of $G$ by
  considering one by one the branches of $G$ (that edge-wise partition
  $G$ and vertex-wise cover $G$).  Let $P = p_1\dots p_k$, ($k\geq 3$
  since $G$ is sparse) be such a branch.  The following elements are
  precoloured: $p_1, p_1p_2, p_{k-1}p_k, p_k$.  The precolouring
  satisfies the requirement of Lemma~\ref{l:extPath}, so we can extend
  it to~$P$.\qed
\end{proof}

\begin{lemma}
  \label{l:extPH}
  Let $G$ be a 2-connected 2-extension of Petersen
  or Heawood graph. Then $G$ is 4-total-colourable.
  Moreover suppose that $u$ is a vertex of
  degree~2 that has two neighbors $a, b$ of degree~3 and suppose that
  $a$, $b$, $au$, $ub$ receive respectively colours 1, 1, 2, 3.  This
  can be extended to a total-colouring of $G$ using $4$ colours.
\end{lemma}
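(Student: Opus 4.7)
The first step of the plan is to pin down what $u$ looks like inside the $2$-extension. Since $G$ is a $2$-extension of $\mathcal{B}\in\{P,H\}$, there is $D\subseteq V(\mathcal{B})$ such that $G$ is obtained from $\mathcal{B}\setminus D$ by a sequence of edge subdivisions, each applied to an edge incident (at the moment of subdivision) to a vertex of degree~$2$. A subdivision never changes the degree of an existing vertex, so $\deg_G(a)=\deg_G(b)=3$ forces $a$ and $b$ to have degree~$3$ throughout the process, in particular in $\mathcal{B}\setminus D$, with all three of their $\mathcal{B}$-neighbours present. This rules out that the edges $au,ub$ have been subdivided, and it also rules out that $u$ itself is a subdivision vertex: if it were, it would have arisen by subdividing the edge $ab$ of the current graph, an operation forbidden because neither $a$ nor $b$ ever has degree~$2$. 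Hence $u$ is an original vertex of $\mathcal{B}$, of degree~$2$ in $\mathcal{B}\setminus D$; since $\mathcal{B}$ is cubic, this means that exactly one $\mathcal{B}$-neighbour $w$ of $u$ belongs to $D$, and the other two are $a$ and $b$.

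With this in hand, the plan has three further stages. First, exhibit a $4$-total-colouring of $\mathcal{B}$ in which $a,b$ both receive colour~$1$, $u$ receives colour~$4$, and $au,ub$ receive colours~$2,3$ respectively; the colour of $uw$ is then forced to be~$1$, since the four elements incident at $u$ must use all four colours. Second, restrict this colouring to $\mathcal{B}\setminus D$: no element incident to $a$, $b$, or $u$ has been removed, so the restriction is a partial $4$-total-colouring of $G$ agreeing with the prescribed precolouring. Third, extend the partial colouring to the whole of $G$ by treating each subdivided branch as a path with precoloured endvertices and end-edges and invoking Lemma~\ref{l:extPath}; the inherited boundary colours come from a proper total-colouring of $\mathcal{B}\setminus D$, so the forbidden pattern of that lemma cannot occur. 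The unconditional $4$-total-colourability assertion is obtained analogously by starting from any $4$-total-colouring of $\mathcal{B}$ and performing the same restriction-and-extension procedure.

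The main obstacle is the first stage: producing a $4$-total-colouring of $\mathcal{B}$ realising the prescribed pattern at $\{a,u,b,au,ub\}$. Here I would exploit the sample colourings of $P$ and $H$ in Figure~\ref{f:colouring-basic} together with the vertex- and arc-transitivity of both graphs. Vertex-transitivity places $u$ at any desired vertex, arc-transitivity permutes the three edge-colours at $u$, and there is freedom in selecting which of $u$'s three $\mathcal{B}$-neighbours is to play the role of the deleted neighbour $w$. The substantive remaining question is whether the two neighbours of $u$ sitting opposite the edges coloured~$2$ and~$3$ can simultaneously receive colour~$1$; this reduces to a finite verification on a single canonical arc of each of $P$ and $H$, carried out by inspecting the sample colourings up to a permutation of the four colours and an automorphism of $\mathcal{B}$.
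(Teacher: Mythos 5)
There is a genuine gap, and it sits at the heart of your third stage. You claim that after restricting a $4$-total-colouring of $\mathcal{B}$ to $\mathcal{B}\setminus D$, the boundary colours handed to Lemma~\ref{l:extPath} on each subdivided branch cannot exhibit the forbidden pattern ``because they come from a proper total-colouring.'' This is false. For a path $xyz$ of length~2, a proper total-colouring only forbids equal colours on elements that are adjacent or incident; the element $x$ and the edge $yz$ are neither, and likewise $xy$ and $z$. So the pattern $x=A$, $xy=B$, $y=C$, $yz=A$, $z=B$ is a perfectly legal proper colouring of $xyz$, and it is exactly the configuration $c_1=c_{2k-2}$, $c_2=c_{2k-1}$ that Lemma~\ref{l:extPath} cannot extend once $xy$ is replaced by a longer path. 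Avoiding this $ABAB$ pattern on the relevant length-2 paths is a special property of a \emph{particular} colouring, not of proper colourings in general: the colouring of $P^*$ in Figure~\ref{f:PetersenStar} happens to avoid it on \emph{all} length-2 paths, but the colouring of $H^*$ in Figure~\ref{f:HeawoodStar} provably does not (some length-2 paths there carry a bad pattern), which is precisely why the paper cannot stop at $H^*$ and must separately analyse the Heawood graph minus two and minus three vertices, observing along the way that several of these cases collapse into 2-extensions of Petersen or sparse graphs. Your argument, as written, would ``prove'' that any restriction of the Figure~\ref{f:HeawoodStar} colouring extends to any further subdivision, which is not something the paper's authors were able to establish and which the existence of bad patterns in that figure directly contradicts. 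So the case analysis you are implicitly skipping is not optional bookkeeping; it is the substance of the lemma.

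Two smaller remarks. Your opening reduction (that $u$ must be an original vertex of $\mathcal{B}$ with exactly one deleted neighbour, and that $au$, $ub$ are original edges) is correct and is a nice clean way to localise the precolouring constraint; the paper handles this more informally via its figures. And your first stage --- producing a colouring of the full Petersen or Heawood graph realising colour~1 on $a,b$ and colours~$2,3$ on $au,ub$ --- is plausible via transitivity but is deferred to an unexecuted ``finite verification''; even granting it, the proof does not close because of the extension gap above. To repair the argument you would need, as the paper does, to exhibit explicit colourings of the reduced graphs of all possible 2-extensions (or reduce them to previously handled cases) in which every branch of length~2 avoids the $ABAB$ pattern, rather than appealing to properness alone.
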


\begin{proof}
  If $G$ is the Petersen or the Heawood graph, the total-colouring is
  shown on Fig.~\ref{f:basicPetersen}.  

  Suppose first that $G$ is obtained from the Petersen graph by
  deleting exactly one vertex (and then subdividing edges incident to
  at least one degree-2 vertex).  This means that the reduced graph of
  $G$ is $P^*$.  On Figure~\ref{f:PetersenStar}, a 4-total-colouring
  of $P^{*}$ is shown. Note that, for all paths of length~2 of
  $P^{*}$, say $xyz$, the colors of $x,xy,yz,z$ never have the pattern
  $ABAB$. This means that Lemma~\ref{l:extPath} allows to extend this
  4-total-colouring of $P^{*}$ to a 4-total-colouring of $G$.  In
  fact, the total-coloring shown in Figure~\ref{f:PetersenStar} can be
  used for any 2-extension of the Petersen graph (where more than one
  vertex is deleted), because not only the branches of length 2, but
  \emph{all} paths of length 2 in $P^*$ are coloured without using the
  patern $ABAB$.
  
  \begin{figure}[thb]
  \begin{center}
  \includegraphics[width=160pt]{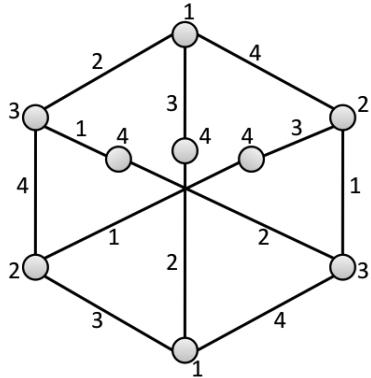}
  \caption{4-total-colouring of the Petersen graph minus one vertex.}
  \label{f:PetersenStar}
  \end{center}
  \end{figure}
  
  Now, suppose that $G$ is obtained from the Heawood graph by deleting
  one vertex (and then subdividing edges incident to at least one
  degree-2 vertex).  This means that the reduced graph of $G$ is
  $H^*$.  On Figure~\ref{f:HeawoodStar}, a 4-total-coloring of $H^{*}$
  is shown. Here, all branches of length~2 have a ``good patern''
  (that is not ABAB), so Lemma~\ref{l:extPath} handles their
  subdivisions.  But unfortunately, some paths of length~2 have a bad
  pattern. So, we are done when exactly one vertex is deleted, but we
  have to study what happens when 2 vertices are deleted.
  
  \begin{figure}[thb]
  \begin{center}
  \includegraphics[width=180pt]{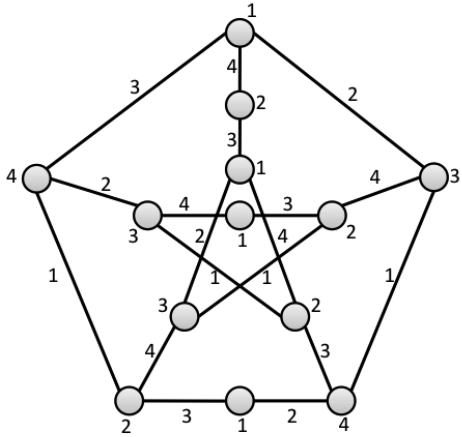}
  \caption{4-total-colouring of the Heawood graph minus one vertex.}
  \label{f:HeawoodStar}
  \end{center}
  \end{figure}
  
  So, suppose that $G$ is obtained from the Heawood graph by deleting
  two vertices (and then subdividing edges incident to at least one
  degree-2 vertex).  Figures~\ref{f:HeawoodStarStarA}
  and~\ref{f:HeawoodStarStarB} show the only two reduced graphs that
  may happen.  All other cases are either isomorphic to these, or have
  a cutvertex.  In the graph of Figure~\ref{f:HeawoodStarStarA}, a
  2-extension of the Petersen graph is obtained, so we are done (to
  see this, consider the reduced graph, and add a vertex adjacent to
  the three vertices of degree 2, this gives a classical embedding of
  Petersen).  In the graph of Figure~\ref{f:HeawoodStarStarB}, a
  coloring is shown.
  
  \begin{figure}[thb]
  \begin{center}
  \includegraphics[width=120pt]{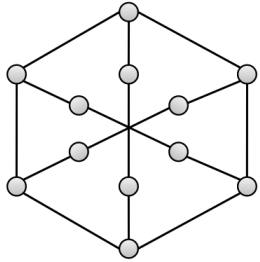}
  \caption{Heawood graph minus two vertices --- a 2-extension of the Petersen graph.}
  \label{f:HeawoodStarStarA}
  \end{center}
  \end{figure}
  
  \begin{figure}[thb]
  \begin{center}
  \includegraphics[width=160pt]{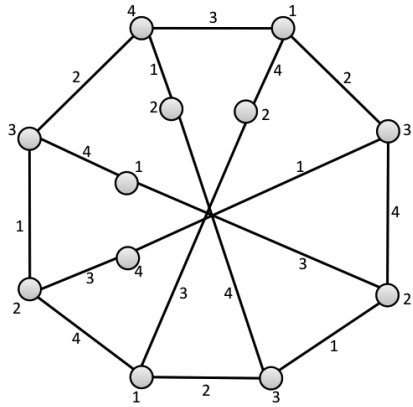}
  \caption{4-total-colouring of the Heawood graph minus two vertices.}
  \label{f:HeawoodStarStarB}
  \end{center}
  \end{figure}
  
  Now suppose that $G$ is obtained by deleting 3 vertices from the
  Heawood graph (and then subdividing edges incident to at least one
  degree-2 vertex).  Since we are done in graph of
  Figure~\ref{f:HeawoodStarStarA} (because all 2-extensions of the
  Petersen graph are already handled), we may assume that one vertex
  is deleted from graph of Figure~\ref{f:HeawoodStarStarB}, and up to
  symmetry, there is only one way to do so. This leads us to the graph
  of Figure~\ref{f:HeawoodStarStarStar} where a coloring is
  shown. Here, up to symmetries, there are two possible places for $u$
  (highlighted in the figure) but the coloring handles both. Note that
  it is essential that all branches of length 2 have a good pattern
  (not ABAB).  If more vertices are deleted, a 2-extension of the
  Petersen graph or a sparse graph is obtained.\qed
  
  \begin{figure}[thb]
  \begin{center}
  \includegraphics[width=160pt]{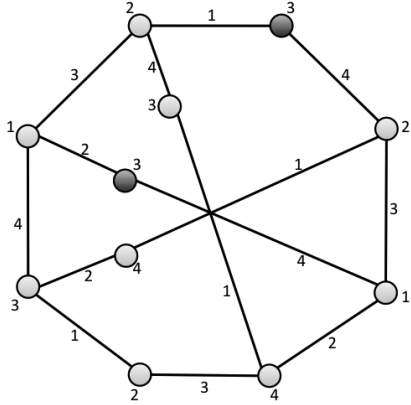}
  \caption{4-total-colouring of the Heawood graph minus three vertices.}
  \label{f:HeawoodStarStarStar}
  \end{center}
  \end{figure}
  
\end{proof}

\subsection*{Proof of Theorem~\ref{t:main}}
\label{s:proof}

\begin{figure}[t]
     \centering
     \subfigure[]{
          \label{f:extTotI1} 
          \includegraphics[width=120pt]{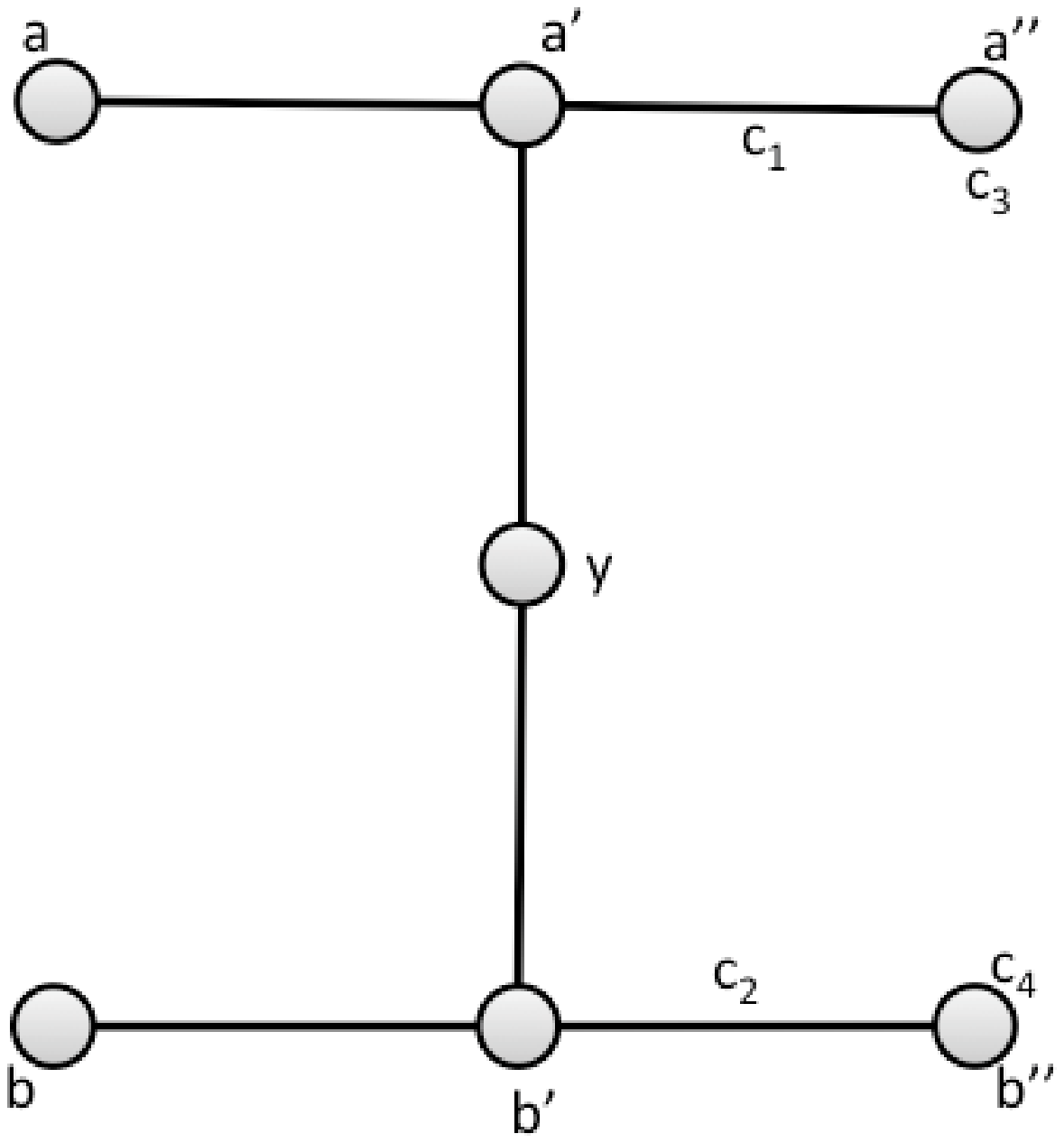}
}
          \hspace{20pt}
     \subfigure[]{
          \label{f:extTotI2} 
          \includegraphics[width=120pt]{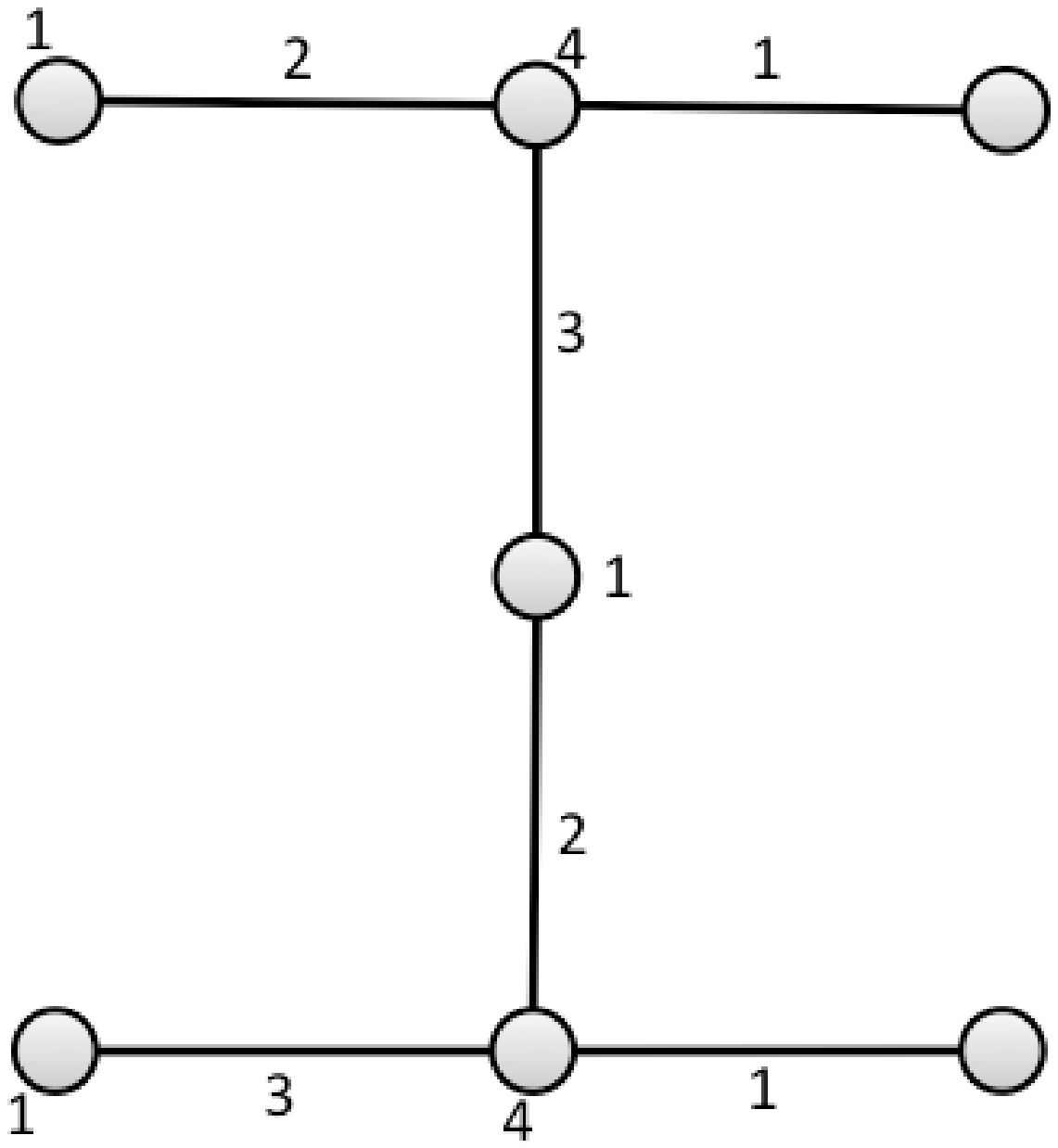}
}
     \subfigure[]{
          \label{f:extTotI3} 
          \includegraphics[width=120pt]{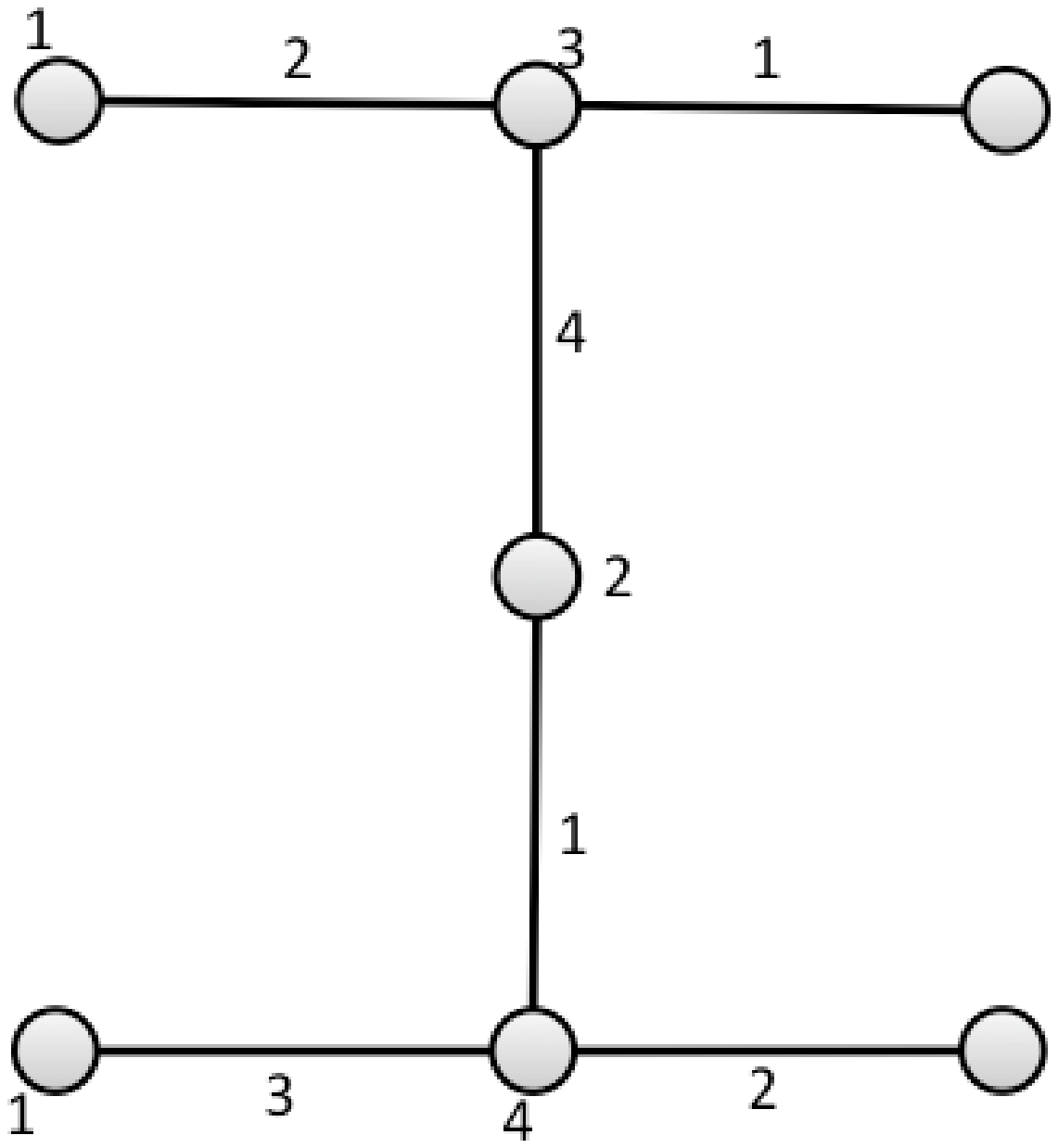}
}
     \subfigure[]{
          \label{f:extTotI4} 
          \includegraphics[width=120pt]{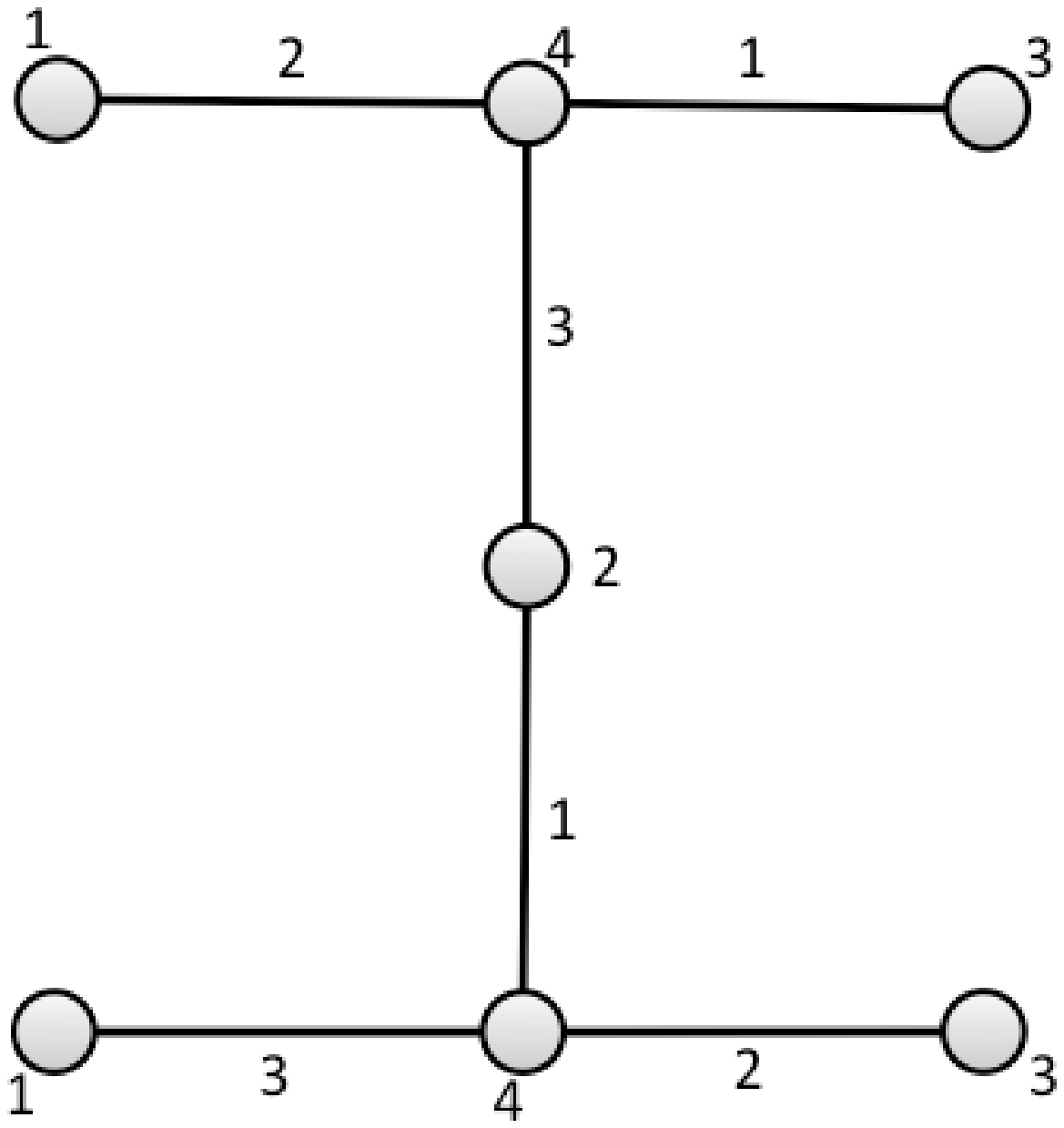}
}
    \caption{Several 4-colourings.}
    \label{f:colouring-basic}
\end{figure}

\begin{proof}
  If the maximum degree of a graph is at most~2, then the graph is a
  disjoint union of paths and cycles, so the conclusion holds.  So,
  let~$G$ be a $\{$square,unichord$\}$-free graph of maximum degree~3
  that is not a complete graph on four vertices.  We shall prove that
  $G$ is 4-total-colourable by induction on $|V(G)|$.  By
  Lemmas~\ref{l:extSparse} and~\ref {l:extPH} this holds for sparse
  graphs and for 2-connected 2-extensions of the Petersen graph and
  the Heawood graph (in particular for the claw=$K_{1,3}$, the
  smallest $\{$square,unichord$\}$-free graph of maximum degree at
  least 3).

  If $G$ has a 1-cutset with split $(X, Y, v)$, a 4-total-colouring of
  $G$ can be recovered from 4-total-colourings of its blocks.  Hence,
  we suppose that $G$ is 2-connected, and we apply
  Theorem~\ref{t:newdecompositionNoS}.  The only outcome not handled
  so far is that $G$ has a proper 2-cutset with split $(X, Y, a, b)$,
  and we choose such a 2-cutset subject to the minimality of $|X|$.
  By Lemma~\ref{l:newextremal}, $a$ and $b$ both have two neighbors in
  $X$ and the block of decomposition $G_X$ is sparse or is a
  2-extension of Petersen or Heawood graph.

  Since $G$ has maximum degree 3, by Lemma~\ref{l:newextremal},
  vertices $a, b$ both have a unique neighbor in $Y$, say $a', b'$
  respectively.  Note that $a'\neq b'$, for otherwise, $a'=b'$ would
  be a cutvertex of the graph (since $|Y| \geq 2$ and $a, b$ have
  degree~3).  We claim that there exists a total-colouring of $G[Y
  \cup \{a, b\}]$ such that $a$ and $b$ both receive colour~1, $aa'$
  receive colour~2 and $bb'$ receive colour~3.  To prove the claim, we
  consider two cases.

  \noindent{\bf Case~1:} $Y$ contains no vertex adjacent to $a'$ and
  $b'$.  This property allows us to build the block of decomposition
  $G_Y$ in a slightly unusual way: block $G_Y$ is obtained from $G[Y
  \cup \{a, b\}]$ by contracting $a$ and $b$ into a new vertex
  $u_{ab}$.  Since $a' \neq b'$, this does not create a double edge.
  Let us check that $G_Y$ is $\{$square,unichord$\}$-free.  Since $a$
  and $b$ have each a unique neighbor in $Y$, it is easily seen that
  $G_Y$ is unichord-free.  Since $G$ is square-free, a square in $G_Y$
  must be formed by $u_{ab}, a', b'$ and a common neighbor $a'$ and
  $b'$, a contradiction to the assumption of Case~1.  Now, by the induction
  hypothesis, we total-colour $G_Y$.  Up to symmetry, $u_{ab}$
  receives colour~1, $u_{ab}a'$ receive colour~2, and $u_{ab}b'$ receive
  colour~3.  This 4-total-colouring is also a 4-total-colouring of $G[Y
  \cup \{a, b\}]$ (we give colour~1 to $a$ and $b$, color 2 to $aa'$, color 3 to $bb'$,
  and the color of any other element in $G[Y\cup\{a,b\}]$ is the same as in $G_Y$).  
  This completes the proof of the claim in Case~1.

  \noindent{\bf Case~2:} $Y$ contains a vertex $y$ adjacent to $a'$
  and $b'$.  
  If both $a'$ and $b'$ have degree~2 then $G$ is sparse.
  If one of $a'$ or $b'$ has degree~2 then this vertex is a 1-cutset of $G$.
  Hence we may assume that both $a'$ and $b'$ have degree~3.
  We total-colour $G[Y \setminus \{y\}]$ by the induction
  hypothesis, and check that vertices $a'$ and $b'$ can be recoloured
  so that the 4-total-colouring can be extended to a 4-total-colouring of
  $G[Y \cup \{a, b\}]$ that satisfies our constraints.  We call $a''$
  (resp.\ $b''$) the unique neighbour of $a'$ in $Y$ that is not $y$
  or $a$ (resp.\ $b$).  We suppose that $a'a''$, $b'b''$, $a''$ and
  $b''$ receive colours $c_1$, $c_2$, $c_3$, $c_4$ respectively, see
  Figure~\ref{f:extTotI1}.

  If $c_1=c_2$, say $c_1 = c_2 = 1$, then $|\{c_1, c_2, c_3, c_4\}|
  \leq 3$, so at least one colour, say 4, is in $\{1, 2, 3, 4\}
  \setminus \{c_1, c_2, c_3, c_4\}$.  We may use this colour to
  recolour $a'$ and $b'$, and the 4-total-colouring can be extended as
  in Figure~\ref{f:extTotI2}.  So, from here on, we suppose $c_1 \neq
  c_2$, say $c_1 = 1$ and $c_2 = 2$.

  If $c_3 \neq c_4$, then up to a relabelling of the colours, we may
  recolour $a'$ and $b'$ with colours $3$ and $4$ respectively.
  Indeed, this can be checked when $\{c_3, c_4\}$ is any of $\{1,
  2\}$, $\{1, 3\}$, \dots, $\{3, 4\}$.  Then the 4-total-colouring can
  be extended as in Figure~\ref{f:extTotI3}.

  Finally, we may assume $c_3 = c_4$, say $c_3 = c_4 = 3$.  In this
  case, the 4-total-colouring can be extended as in
  Figure~\ref{f:extTotI4}.  This completes the proof of the claim in
  Case~2.

  Now, let $G_X$ be defined by adding to $G[X \cup \{a, b\}]$ a vertex
  $u$ adjacent to $a, b$ even if $a$ and $b$ have a common neighbor
  (so this is not the block as defined after Theorem~\ref{th:1}).  
%
%
  If there is a node of $X$ whose neighborhood in $G$ is $\{a,b\}$
  then $G[X\cup\{a,b\}]$ is
  a cycle (because of the minimality of $X$) and $G_X$ is sparse, and
  contains a square.  Otherwise,
%
%
  $G_X'=G_X$,
  so by
  Lemma~\ref{l:newextremal}, $G_X$ is sparse or a 2-extension of
  Petersen or Heawood graph.  In $G_X$, precolour $a$ and
  $b$ with colour~1, $ua$ with colour~2, and $ub$ with colour~3.
  Apply Lemma~\ref{l:extSparse} (if $G_X$ is sparse) or
  Lemma~\ref{l:extPH} (if $G_X$ is a 2-extension of Petersen
  or Heawood graph) to~$G_X$.  This gives a 4-total-colouring of
  $G_X$.  A 4-total-colouring of $G$ is obtained as
  follows: elements of $G$ that are in $G_X$ receive the colour they
  have in $G_X$, and elements that are in $G[Y \cup \{a, b\}]$ receive
  colours as in the claim above.\qed
\end{proof}

\section*{Acknowledgements}

We are deeply indebted to an anonymous referee whose comments and guidance helped us to improve a lot our manuscript.
This research was partially supported by the Brazilian research agencies CNPq (grant Universal 472144/2009-0) and FAPERJ (grant INST E-26/111.837/2010).
The third author is partially supported by the French \emph{Agence Nationale de la Recherche} under reference \textsc{anr 10 jcjc 0204 01}.

\newpage





\begin{thebibliography}{9}


%























\bibitem{OneKindJoin}
G. Li, L. Zhang. \emph{Total chromatic number of one kind of join graphs.} Discrete Math. \textbf{306} (2006) 1895--1905.

\bibitem{OutroJoin}
G. Li, L. Zhang. \emph{Total chromatic number of the join of $K_{m,n}$ and $C_{n}$.} Chinese Quart. J. Math. \textbf{21} (2006) 264--270.




\bibitem{Kostochka1}
A. V. Kostochka. \emph{The total coloring of a multigraph with maximal degree~4}. Discrete Math. \textbf{17} (1977) 161--163.

\bibitem{Kostochka2}
A. V. Kostochka. \emph{The total chromatic number of any multigraph with maximum degree five is at most seven}. Discrete Math. \textbf{162} (1996) 199--214.




\bibitem{Total-tv}
R. C. S. Machado and C. M. H. de Figueiredo. \emph{Total chromatic number of unichord-free graphs.} Discrete Appl. Math. \textbf{159} (2011) 1851--1864.

\bibitem{BipJBCS}
R. C. S. Machado and C. M. H. de Figueiredo. \emph{Complexity separating
classes for edge-colouring and total-colouring}. J. Braz. Comput. Soc. \textbf{17} DOI: 10.1007/s13173-011-0040-8.

\bibitem{Chordless}
R. C. S. Machado, C. M. H. de Figueiredo, and N. Trotignon. \emph{Edge-colouring and total-colouring chordless graphs}. Available at {\small \texttt{http://perso.ens-lyon.fr/nicolas.trotignon/articles/chordless.pdf}}.

\bibitem{Edge-tv}
R. C. S. Machado, C. M. H. de Figueiredo, and K. Vu{\v s}kovi{\'c}. \emph{Chromatic index of graphs with no cycle with unique chord}. Theoret. Comput. Sci. \textbf{411} (2010) 1221--1234.

 \bibitem{SanchezArroyo}
 C. J. H. McDiarmid, A. S\'anchez-Arroyo. \emph{Total colouring regular bipartite graphs is NP-hard.} Discrete Math. \textbf{124} (1994) 155--162.





\bibitem{Rosenfeld}
M. Rosenfeld. \emph{On the total coloring of certain graphs}. Israel J.~Math. \textbf{9} (1971) 396--402.

\bibitem{SanchezArroyo1}
A. S\'anchez-Arroyo. \emph{Determining the total colouring number is NP-hard.} Discrete Math. \textbf{78} (1989) 315--319.



\bibitem{tv}
N. Trotignon and K. Vu{\v s}kovi{\'c}.
\emph{A structure theorem for graphs with no cycle with a unique chord and
its consequences}. J. Graph Theory \textbf{63} (2010) 31--67.

\bibitem{Vizing}
V. G. Vizing. \emph{On an estimate of the chromatic class of a
$p$-graph} (in Russian). Diskret. Analiz \textbf{3} (1964) 25--30.

\bibitem{Vijayaditya}
N. Vijayaditya. \emph{On the total chromatic number of a graph}. J.~London Math. Soc. (2) \textbf{3} (1971) 405--408.


\bibitem{WeifanWang}
W. Wang. \emph{Total chromatic number of planar graphs with maximum degree ten.} J. Graph Theory \textbf{54} (2007) 91--102.

\bibitem{WangPang}
S.-D. Wang and S.-C. Pang. \emph{The determination of the total-chromatic number of series-parallel graphs.} Graphs Combin. \textbf{21} (2005) 531--540.




\bibitem{Zhang}
Z. Zhang, J. Zhang, and J. Wang. \emph{Total chromatic number of some graphs.} Sci. Sinica Ser.~A \textbf{31} (1988) 1434--1441.

\end{thebibliography}
\end{document}